\documentclass[letterpaper]{article} 
\usepackage[preprint]{aaai2027}  
\usepackage[hyphens]{url}  
\usepackage{graphicx} 
\usepackage{subcaption}
\usepackage{multirow}
\usepackage{natbib}  
\usepackage{caption} 
\usepackage{algorithm}
\usepackage{algorithmic}

\usepackage{amssymb}

\usepackage{newfloat}
\usepackage{amsmath}
\usepackage{amsthm}
\usepackage{listings}

\newtheorem{theorem}{Theorem}
\newtheorem{lemma}{Lemma}
\newtheorem{proposition}{Proposition}
\newtheorem{assumption}{Assumption}

\newcommand{\meanstd}[2]{#1^{\scriptscriptstyle \pm #2}}

\DeclareCaptionStyle{ruled}{labelfont=normalfont,labelsep=colon,strut=off} 
\floatstyle{ruled}
\newfloat{listing}{tb}{lst}{}
\floatname{listing}{Listing}

\usepackage{booktabs}

\title{Latent Two-Sample Testing for Fair Autonomous Vehicle Road Evaluation}
\author {
    Qiujing Lu\textsuperscript{\rm 1}
    Xuanhan Wang\textsuperscript{\rm 1}
    Guanghong Jia\textsuperscript{\rm 1}
     Zhenxia Zhu\textsuperscript{\rm 2}
     Huijuan Lin\textsuperscript{\rm 2}
     Kehua Sheng\textsuperscript{\rm 2}\\
     Zhichao Hou\textsuperscript{\rm 1}
     Shuo Feng\textsuperscript{\rm 1} \corresponding
}
\affiliations {
    \textsuperscript{\rm 1}Tsinghua University\\
    \textsuperscript{\rm 2}BEIJING DIDI INFINITY TECHNOLOGY AND DEVELOPMENT CO., LTD. \\
}

\begin{document}

\maketitle

\begin{abstract}
With the rapid advancement of autonomous vehicle (AV) systems, fast and reliable iteration through road testing has become increasingly critical. However, changes in testing environments make it difficult to disentangle true performance differences between AV versions from extraneous environmental variations, undermining fair and reliable evaluation. This challenge is further compounded by the high-dimensional and unstructured nature of large-scale road testing data, for which effective analysis and comparison methods remain limited. In this work, we address these challenges by introducing a principled framework for distribution-aware AV evaluation. We first learn structured latent representations that map high-dimensional, unstructured road testing data into a compact latent space, enabling effective characterization of scenario distributions. Building on this representation, we propose a two-stage two-sample testing framework that (i) detects and localizes distributional shifts between testing datasets and (ii) calibrates these shifts via importance sampling to reduce evaluation bias and metric estimation error. Experiments on synthetic and real-world road testing data demonstrate the effectiveness of the proposed method.
\end{abstract}

\section{Introduction}
Recent advances in end-to-end learning have significantly accelerated the iteration cycle of autonomous vehicle (AV) systems. New model versions are produced frequently, each incorporating changes in perception, prediction, planning, or control modules. As a consequence, deciding which version should be promoted for further development or deployment places unprecedented demands on the evaluation process. In practice, simulation testing and road testing constitute the two primary evaluation stages prior to autonomous vehicle deployment. Simulation testing scales well, but fidelity remains limited:  replay-based simulation lacks closed-loop interaction once the vehicle under test deviates from recorded trajectories, and generative scenarios do not always guarantee realistic agent behavior. As a result, road testing remains the most reliable means of evaluating AV performance in real operational environments.
\begin{figure}[t]
    \centering
    \includegraphics[width=0.95\linewidth]{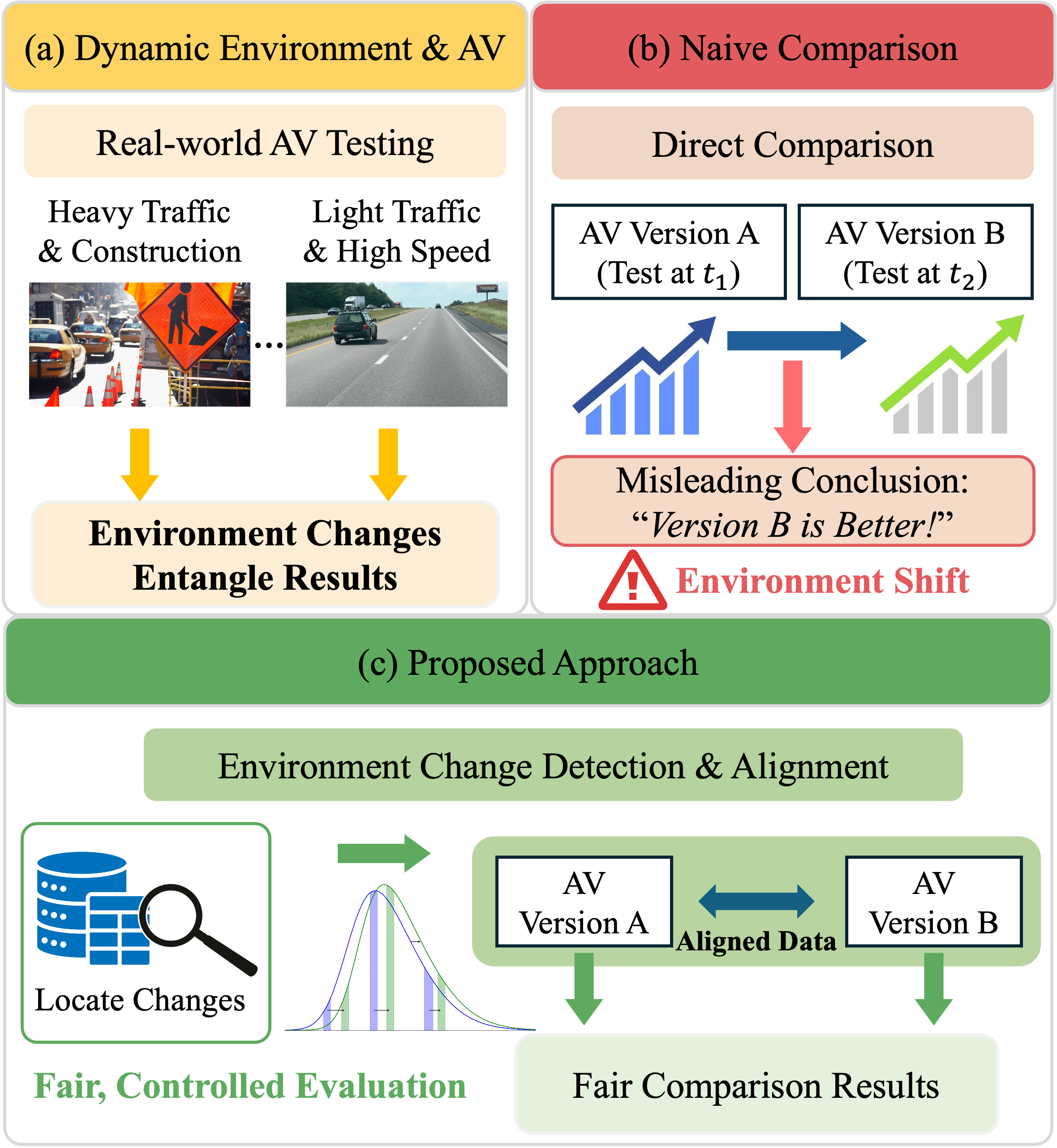}
    \caption{We illustrate evaluation bias caused by distribution shift between environments (P) and (Q). Even with a fixed AV policy, differences in scenario composition (e.g., frequency of cut-in types) lead to biased metric estimates, motivating distribution-aware evaluation and calibration.}
    \label{fig:intro}
\end{figure}
Despite its importance, road testing has long faced a fundamental and practically important challenge: fair AV comparison. Real-world environments are inherently uncontrollable: different AV versions may encounter different scenarios, traffic participants, and environmental contexts, with uneven frequencies that cannot be prescribed in advance. As illustrated in Fig.~\ref{fig:intro}, road-test coverage is therefore incomplete and distribution-dependent. Consequently, observed performance reflects not only the capability of the AV version but also the scenario distribution under which it is tested. Apparent performance gains may thus result from distributional mismatch rather than true capability improvement. This long-standing issue has become increasingly critical as AV systems iterate faster, while limited road-test samples make the resulting testing distributions more dynamic and uneven. However, existing evaluation methods still lack a principled framework for comparing AV versions under distributional differences. Current approaches often rely on aggregated metrics or heuristic scenario grouping, which can mask distributional mismatch and lead to unfair performance comparisons. Moreover, the high dimensionality and unstructured nature of sampled scenarios make direct comparison between road-test datasets difficult.

To address these challenges, we propose a unified framework for large-scale distribution
comparison of autonomous driving data collected from road testing.
Our approach leverages unsupervised context embeddings to map high-dimensional driving logs
into structured latent representations, enabling consistent and interpretable analysis.
Building on this representation, we introduce a two-stage two-sample testing method to localize
and characterize distributional differences, and a distribution calibration strategy that
reduces evaluation bias caused by scenario imbalance, enabling fairer and more reliable performance comparison.

\section{Related work}
\paragraph{Two Sample Testing} The two-sample testing problem seeks to determine whether two independent samples are drawn from the same distribution~\cite{lehmann2005testing}. A wide range of classical methods have been developed for two-sample testing in statistics. Representative examples include the $t$-test for detecting differences in sample means under normality assumptions~\cite{satterthwaite1946approximate,ruxton2006unequal}, as well as nonparametric, distribution-free methods such as the Wilcoxon--Mann--Whitney test~\cite{wilcoxon1945individual,mann1947test}, which is based on rank statistics, and the Kolmogorov--Smirnov test~\cite{an1933sulla,smirnov1939estimation}, which compares empirical cumulative distribution functions. While these approaches are effective in one-dimensional settings, their statistical power and practical applicability deteriorate rapidly as data dimensionality increases. Kernel-based methods such as the Maximum Mean Discrepancy (MMD) test~\cite{gretton2006kernel,gretton2012kernel} extend two-sample testing to high-dimensional settings. Despite strong theoretical guarantees, these approaches depend on manually specified kernels, which fix the underlying function class and limit adaptivity to complex, data-dependent distributional differences. Classifier two-sample tests (C2ST) instead learn a data-adaptive
discriminator and use its held-out classification accuracy as a global discrepancy statistic \cite{lopez2016revisiting}.

\paragraph{Locating Distributional Differences}
While global two-sample tests determine whether two distributions differ, they do not explain how or where these differences arise. This limitation has motivated the study of local two-sample testing, which aims to identify regions or instances responsible for distributional discrepancies. Early work in this area~\cite{duong2013local} leverages kernel density estimation (KDE) to detect local differences between probability density functions. However, density-based approaches scale poorly to high-dimensional spaces and require accurate density estimation, which is notoriously difficult under data scarcity. Kernel mean embedding methods partially address these challenges. The Mean Embedding (ME) test~\cite{jitkrittum2016interpretable} localizes differences through a small set of optimized test locations, offering better interpretability than "black-box" global kernel tests. Nevertheless, ME-based methods provide location-specific probes rather than explicit importance scores for individual data points, and their performance is sensitive to the optimization of these locations. Other approaches explore alternative notions of locality: region-based methods~\cite{soriano2017probabilistic} partition the input space to detect discrepancies in sample cardinality within specific bins, while regression-based methods~\cite{kim2019global} define locality through the lens of classification certainty. The latter identifies differences by estimating the posterior probability of a sample’s group membership at specific coordinates, shifting the focus from comparing local density masses to evaluating the discriminability of samples at a pointwise level.

These methods, however, still struggle with complex, high-dimensional data. Naive partitioning ignores intrinsic geometry, and the reliable estimation of conditional probabilities or densities remains infeasible under finite samples. Despite advances in domain-specific settings—such as astronomy~\cite{freeman2017local} and medical analysis~\cite{zhao2021detection}—developing scalable, data-efficient methods for localizing discrepancies in structured and dynamically evolving data remains an open challenge.

\paragraph{AV evaluation and testing} 
AV testing frameworks have been systematically developed over the past two decades along with multiple international standards and industrial practices \cite{ISO26262,ISO21448, PEGASUS}. However, as AV systems become increasingly complex and safer, globally validating their safety and reliably evaluating performance has become substantially more challenging \cite{huang2016autonomous,liu2024curse,liu2025autonomous}. To address this challenge, scenario-based \cite{cai2022survey} and environment-based testing methodologies \cite{feng2021intelligent,yan2023learning,feng2023dense, wang2024drivedreamer} have been proposed, aiming to ensure sufficient coverage of diverse environmental conditions and interaction patterns. Although simulation-based testing provides scalability and controllability, fidelity limitations in sensor modeling and agent interactions continue to hinder reliable sim-to-real evaluation \cite{lindstrom2024nerfs,birchler2025roadmap}. As Level-4 AV fleets scale, road testing has become increasingly prevalent, yielding large real-world datasets. Despite its cost and limited controllability, road testing remains a uniquely reliable source for evaluation, as it directly reflects real operational environments and reveals distributional variations that are difficult to faithfully reproduce in simulation. This exposes a critical research gap in how to fairly compare road-testing data across deployments under evolving environments.

\begin{figure*}[t]
    \centering
    \includegraphics[width=0.99\textwidth]{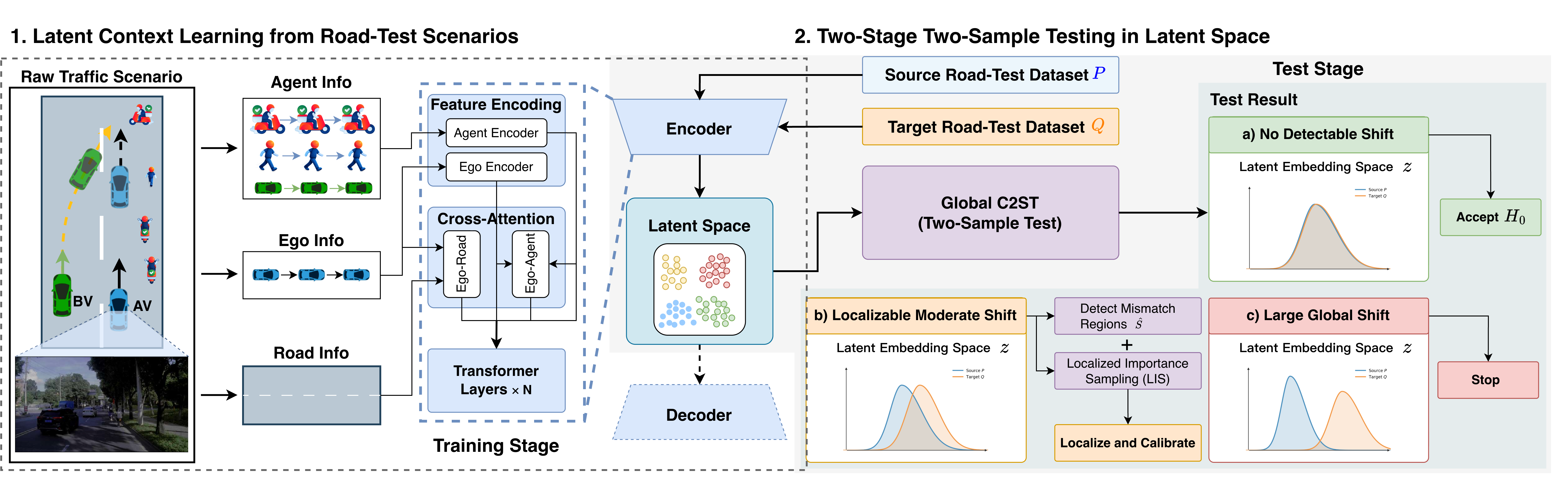}
    \caption{\textbf{Overview of the proposed distribution-aware road-test evaluation framework.} Scenario embeddings are learned from ego, agent, and road features, then compared using global C2ST; moderate shifts are localized and calibrated with LIS, while negligible or large shifts trigger early termination.}
    \label{fig:architecture}
\end{figure*}

\section{Problem Formulation}
\label{sec:problem}

\subsection{Road-Test Scenarios and Performance Evaluation}

We model each road-test scenario as the outcome of a closed-loop interaction
between an AV and its environment. Let $\mathcal{C}$ denote the context space,
including road geometry, infrastructure, traffic composition, and the initial
states of traffic participants. Given a context $C\in\mathcal{C}$ and a fixed
ego policy $\pi_s$, closed-loop execution generates a realized scenario
    $X=\tau(\pi_s;C)\in\mathcal{X}$,
where $\mathcal{X}$ denotes the raw scenario space. A learned encoder $\phi:\mathcal{X}\rightarrow\mathcal{Z}$ maps each scenario to $Z=\phi(X)\in\mathcal{Z}$. Policy performance is measured by a bounded trajectory-level metric
$\ell:\mathcal{X}\rightarrow[0,M]$, such as a safety-event indicator or a
comfort score. The context-conditional performance is
$ L(\pi_s,C)
    =
    \mathbb{E}\!\left[\ell(X)\mid C,\pi_s\right].$

We consider an observed road-testing regime with context distribution $P_C$
and a target regime with distribution $Q_C$. The ego policy is held fixed
across the two regimes, so their difference primarily reflects variation in
scenario exposure. Closed-loop execution induces scenario distributions
$P_X,Q_X$. The target performance is
\begin{equation}
    \mu_Q(\pi_s)
    =
    \mathbb{E}_{C\sim Q_C}[L(\pi_s,C)]
    =
    \mathbb{E}_{X\sim Q_X}[\ell(X)].
    \label{eq:target_metric}
\end{equation}
However, the available road-test samples are drawn from $P_X$, and a naive
evaluation estimates
$ \mu_P(\pi_s) = \mathbb{E}_{X\sim P_X}[\ell(X)]. $
When $P_C\neq Q_C$, the resulting evaluation offset is $\Delta_\ell(P,Q;\pi_s)
    =
    \mu_P(\pi_s)-\mu_Q(\pi_s),$
which may confound environmental variation with AV performance.

\subsection{Quantification of Distribution Shift}
\label{sec:shift_quantification}

Directly comparing $P_X$ and $Q_X$ is challenging because road-test scenarios
are high-dimensional, variable-length multi-agent sequences. We therefore
compare their induced latent distributions $P_Z$ and $Q_Z$.

Consider a balanced binary classification problem with source label
$Y=0$ for $P_Z$ and $Y=1$ for $Q_Z$. Let $a^\star(P_Z,Q_Z)$ denote the
Bayes-optimal classification accuracy and define its advantage over random
guessing as $ \gamma^\star =  a^\star(P_Z,Q_Z)-\frac{1}{2}.$

\begin{proposition}[Classifier advantage and total variation]
\label{prop:classifier_tv}
Under equal class priors,
\begin{equation*}
    \operatorname{TV}(P_Z,Q_Z)
    =
    2\gamma^\star.
    \label{eq:tv_bayes}
\end{equation*}
For any fixed classifier $h$ with population accuracy $a(h)$ and advantage
$\gamma(h)=\max\{0,a(h)-1/2\}$,
\begin{equation*}
    2\gamma(h)
    \leq
    \operatorname{TV}(P_Z,Q_Z).
    \label{eq:classifier_lower_bound}
\end{equation*}
\end{proposition}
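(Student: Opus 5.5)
We work with the convention $\operatorname{TV}(P_Z,Q_Z)=\sup_{A}\lvert P_Z(A)-Q_Z(A)\rvert$ over measurable $A\subseteq\mathcal{Z}$. This is the convention under which the factor $2$ in the statement is correct. If the paper instead intends $\tfrac12\int\lvert p-q\rvert$, it is the same quantity, and we will remark on this at the end. The proof writes the accuracy of any classifier as $\tfrac12$ plus half the mass difference on a set, and then optimizes over sets.

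\textbf{Step 1: accuracy as a mass difference.} Fix a (possibly randomized) classifier $h$. Let $g(z)=\Pr[h(z)=1]\in[0,1]$. With equal priors and $Y=0\leftrightarrow P_Z$, $Y=1\leftrightarrow Q_Z$,
\begin{equation*}
a(h)=\tfrac12\,\mathbb{E}_{P_Z}[1-g]+\tfrac12\,\mathbb{E}_{Q_Z}[g]
=\tfrac12+\tfrac12\Bigl(\mathbb{E}_{Q_Z}[g]-\mathbb{E}_{P_Z}[g]\Bigr).
\end{equation*}
For a deterministic $h$ with $A=\{h=1\}$, the bracket equals $Q_Z(A)-P_Z(A)$.

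\textbf{Step 2: optimize over $g$.} Take densities $p,q$ with respect to a common dominating measure $\nu$, e.g.\ $\nu=P_Z+Q_Z$. The bracket is then $\int g\,(q-p)\,d\nu$. Over $g\in[0,1]$ this is maximized by $g=\mathbf 1\{q>p\}$, which is exactly the Bayes (likelihood-ratio) rule. The maximal value is
\begin{equation*}
\int (q-p)_+\,d\nu=\tfrac12\int\lvert q-p\rvert\,d\nu=\sup_A\lvert Q_Z(A)-P_Z(A)\rvert=\operatorname{TV}(P_Z,Q_Z).
\end{equation*}
This uses the identity $\int(q-p)_+=\int(p-q)_+$, which follows from both densities integrating to $1$. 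The same identity shows that replacing $A$ by its complement lets the supremum with absolute value be attained by a set with $Q_Z(A)\ge P_Z(A)$. Hence $a^\star=\tfrac12+\tfrac12\operatorname{TV}$, i.e.\ $\operatorname{TV}(P_Z,Q_Z)=2\gamma^\star$.

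\textbf{Step 3: the lower bound for a fixed $h$.} By definition of Bayes optimality, $a(h)\le a^\star$, so $a(h)-\tfrac12\le\gamma^\star=\tfrac12\operatorname{TV}$. Since also $\operatorname{TV}\ge0$, taking the maximum with $0$ gives $2\gamma(h)\le\operatorname{TV}(P_Z,Q_Z)$.

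The only real subtlety is bookkeeping the normalization of $\operatorname{TV}$. The sup-over-sets and $\tfrac12 L^1$ definitions agree, whereas the unnormalized $L^1$ distance would give $\operatorname{TV}=4\gamma^\star$. We will state the convention explicitly so that the identity in Step 2 matches the statement. Measure-theoretic issues, namely the existence of densities and the measurability of $\{q>p\}$, are handled by the dominating-measure choice above.
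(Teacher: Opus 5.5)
Your proof is correct and takes essentially the same route as the paper: both write the accuracy as $\tfrac12$ plus half the mass difference on the decision region, take the supremum over regions to get $\operatorname{TV}(P_Z,Q_Z)$, and obtain the fixed-classifier bound because a single rule cannot exceed that supremum. The differences are minor: you compute the supremum through densities and the likelihood-ratio rule $\mathbf{1}\{q>p\}$, which also covers randomized classifiers and makes the $\max\{0,\cdot\}$ step explicit, whereas the paper stays with sets and uses the complement trick to pass from $\sup_A\{P_Z(A)-Q_Z(A)\}$ to $\sup_A|P_Z(A)-Q_Z(A)|$.
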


Thus, the advantage of a learned classifier measures the discrepancy detectable
by that classifier. The proof and finite-sample confidence bounds for the classifier two-sample test (C2ST) are provided in Appendix~\ref{app:c2st_theory}.

\subsection{Latent Distribution Calibration}
\label{sec:calibration_problem}

Our objective is to estimate the target performance $\mu_Q(\pi_s)$ using
samples drawn from the observed distribution $P_X$. Calibration in latent
space requires two standard conditions.

\begin{assumption}[Latent overlap and conditional invariance]
\label{ass:latent_shift}
The target latent distribution is absolutely continuous with respect to the
observed distribution,
$Q_Z\ll P_Z,$
and the conditional expected metric is invariant across the two regimes:
\begin{equation*}
\begin{aligned}
    \mathbb{E}_{P_X}[\ell(X)\mid Z=z]
    &=
    \mathbb{E}_{Q_X}[\ell(X)\mid Z=z] \\
    &:=m(z),
\end{aligned}
\label{eq:conditional_invariance}
\end{equation*}
for $P_Z$- and $Q_Z$-almost every $z$.
\end{assumption}

The second condition requires the learned representation to preserve the
contextual information relevant to the evaluated metric.

\begin{theorem}[Latent importance-calibration identity]
\label{thm:latent_calibration}
Under Assumption~\ref{ass:latent_shift}, define
\begin{equation*}
    w^\star(z)
    =
    \frac{\mathrm{d}Q_Z}{\mathrm{d}P_Z}(z).
\end{equation*}
Then the target performance satisfies
\begin{equation*}
    \mu_Q(\pi_s)
    =
    \mathbb{E}_{X\sim P_X}
    \left[
        w^\star\!\left(\phi(X)\right)\ell(X)
    \right].
\end{equation*}

Moreover, suppose the distributional shift is confined to a measurable region
$\mathcal{S}\subseteq\mathcal{Z}$ such that
$$
    P_Z(A)=Q_Z(A),
    \qquad
    \text{for every measurable }A\subseteq\mathcal{S}^{c}.
$$
Then $w^\star(z)=1$ for $P_Z$-almost every
$z\in\mathcal{S}^{c}$, and
\begin{equation*}
\begin{aligned}
    \mu_Q(\pi_s)
    =
    \mathbb{E}_{X\sim P_X}\big[
        &\ell(X)\mathbf{1}\{\phi(X)\notin\mathcal{S}\} \\
        &+
        w^\star(\phi(X))
        \ell(X)\mathbf{1}\{\phi(X)\in\mathcal{S}\}
    \big].
\end{aligned}
\end{equation*}
\end{theorem}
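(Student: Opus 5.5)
The identity follows from conditioning on $Z=\phi(X)$ and a change of measure, using the two parts of Assumption~\ref{ass:latent_shift} separately. Overlap, $Q_Z\ll P_Z$, guarantees that the Radon--Nikodym derivative $w^\star=\mathrm{d}Q_Z/\mathrm{d}P_Z$ exists. It is a nonnegative $P_Z$-integrable function with $\int g\,\mathrm{d}Q_Z=\int g\,w^\star\,\mathrm{d}P_Z$ for every bounded measurable $g$. Conditional invariance provides a single function $m(z)$ that serves as the regression of $\ell(X)$ on $Z$ under both $P$ and $Q$. Because $\ell$ takes values in $[0,M]$, every expectation below is finite, and we may take $m$ to be a measurable version bounded in $[0,M]$.

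The main identity is obtained from a chain of equalities. Starting from \eqref{eq:target_metric}, the tower property under $Q$ gives $\mu_Q(\pi_s)=\mathbb{E}_{Q_X}[\mathbb{E}_{Q_X}[\ell(X)\mid Z]]=\mathbb{E}_{Q_Z}[m(Z)]$. The change of measure then gives $\mathbb{E}_{Q_Z}[m(Z)]=\mathbb{E}_{P_Z}[w^\star(Z)m(Z)]$. Next, by invariance, $m(Z)=\mathbb{E}_{P_X}[\ell(X)\mid Z]$ $P$-a.s. Since $w^\star(Z)$ is $\sigma(Z)$-measurable, we can pull it inside the conditional expectation: $w^\star(Z)m(Z)=\mathbb{E}_{P_X}[w^\star(Z)\ell(X)\mid Z]$. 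This step needs integrability of $w^\star(Z)\ell(X)$, which follows from $0\le \ell\le M$ and $\mathbb{E}_{P}[w^\star]=1$. Applying the tower property again under $P$ and substituting $Z=\phi(X)$ yields $\mathbb{E}_{P_X}[w^\star(\phi(X))\ell(X)]$.

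For the localized statement, I first show that $w^\star=1$ $P_Z$-a.e. on $\mathcal{S}^c$. For any measurable $A\subseteq\mathcal{S}^c$, we have $\int_A w^\star\,\mathrm{d}P_Z=Q_Z(A)=P_Z(A)=\int_A 1\,\mathrm{d}P_Z$. Now take $A=\{w^\star>1\}\cap\mathcal{S}^c$. Then $\int_A(w^\star-1)\,\mathrm{d}P_Z=0$ with a strictly positive integrand on $A$, so $P_Z(A)=0$. The same argument applied to $\{w^\star<1\}\cap\mathcal{S}^c$ completes this step. This is the standard uniqueness of densities restricted to a sub-measure space.

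Finally, I split $w^\star(\phi(X))\ell(X)$ using $\mathbf{1}\{\phi(X)\notin\mathcal{S}\}+\mathbf{1}\{\phi(X)\in\mathcal{S}\}=1$. On the first event, I replace $w^\star(\phi(X))$ by $1$. This replacement is valid $P_X$-a.s. because $P_Z$ is the pushforward of $P_X$ under $\phi$, so a $P_Z$-null set pulls back to a $P_X$-null set. Linearity of expectation then gives the displayed formula.

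The only delicate point is measure-theoretic rather than computational. Invariance holds only almost everywhere, for both $P_Z$ and $Q_Z$. I have to make sure the version of $m$ used under $Q$ agrees with the one used under $P$ on sets that matter. Since $Q_Z\ll P_Z$, any $P_Z$-null set where versions differ is also $Q_Z$-null, so the argument above goes through.
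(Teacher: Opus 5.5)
Your proof is correct and follows the same route as the paper. The chain is identical: the tower property under $Q$, the Radon--Nikodym change of measure to $P_Z$, conditional invariance plus the tower property under $P$, then $w^\star=1$ on $\mathcal{S}^c$ from agreement of the restricted measures, followed by splitting the expectation. You also supply details the paper leaves implicit, all of which are sound: the explicit uniqueness argument on $\{w^\star>1\}\cap\mathcal{S}^c$ and $\{w^\star<1\}\cap\mathcal{S}^c$, the integrability check for pulling $w^\star(Z)$ inside the conditional expectation, and the transfer of $P_Z$-null sets to $P_X$-null sets under $\phi$.
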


Theorem~\ref{thm:latent_calibration} follows directly from the Radon--Nikodym change-of-measure identity and the tower property of conditional expectation \cite{kallenberg2021foundations}. It further
shows that reweighting is required only within latent regions where the two testing distributions differ.

\section{Methodology}
\label{sec:method}

\subsection{Scenario Representation Learning}
\label{sec:ae_impl}

We learn an encoder
$\phi:\mathcal{X}\rightarrow\mathcal{Z}$
that maps each road-test scenario to a compact representation for distribution
comparison. As shown in Fig.~\ref{fig:architecture}, 
the model takes synchronized temporal sequences extracted from real-world road-testing logs as input, including ego-vehicle states, surrounding-agent states, road-structure features, and the corresponding time-validity masks. The architecture follows a Transformer-based encoder--decoder design. 

Agent and ego states are first encoded into a shared feature space, while agent-type and temporal information are incorporated during feature encoding. Two cross-attention branches then model complementary dependencies: the ego--agent branch captures dynamic interactions between the ego vehicle and surrounding traffic participants, whereas the ego--road branch encodes the relationship between ego motion and road context. The resulting interaction-aware and context-aware features are fused with the original ego and agent embeddings at each time step and projected into a unified representation. A temporal Transformer subsequently aggregates these features across the scenario horizon, followed by masked temporal pooling to obtain the scene-level latent representation $Z=\phi(X)$. The decoder reconstructs behavior-relevant scenario attributes, encouraging the latent space to preserve the information required for downstream distribution comparison.

The model is trained using a combination of reconstruction and attribute-prediction objectives:
$\mathcal{L}_{\mathrm{train}}
=
\mathcal{L}_{\mathrm{recon}}
+
\mathcal{L}_{\mathrm{attr}},$
where $\mathcal{L}_{\mathrm{recon}}$ measures the discrepancy between the input scenario and its reconstruction, and $\mathcal{L}_{\mathrm{attr}}$  evaluates the prediction errors of behavior-relevant attributes from the latent representation.

\subsection{Two-Stage Latent Two-Sample Testing}
\label{sec:two_stage_tst}

Let$
    \{X_i^{P}\}_{i=1}^{n_P}\sim P_X,
    \{X_j^{Q}\}_{j=1}^{n_Q}\sim Q_X
$ denote scenarios from the observed and target regimes. Their latent
representations are $
    Z_i^{P}=\phi(X_i^{P})\sim P_Z,
    Z_j^{Q}=\phi(X_j^{Q})\sim Q_Z.$

\paragraph{Stage I: global shift detection.}
We train a probabilistic classifier
$h:\mathcal{Z}\rightarrow[0,1]$
to predict the source label, where
\begin{equation*}
    h(z)
    \approx
    \mathbb{P}(Y=1\mid Z=z)
\end{equation*}
and $Y=1$ denotes the target distribution $Q_Z$. Let $\widehat a$ denote
classification accuracy on an independent balanced test set. We define the
empirical classifier advantage as
\begin{equation}
    \widehat{\gamma}
    =
    \max\left\{0,\widehat a-\frac{1}{2}\right\}.
    \label{eq:empirical_advantage}
\end{equation}
By Proposition~\ref{prop:classifier_tv}, this quantity estimates the distributional discrepancy detectable by the learned classifier.

We use a confidence-adjusted statistic
$\gamma_{\mathrm{LCB}}$ to distinguish three regimes: no statistically
detectable shift, a moderate shift suitable for localization, and a large
global shift for which local analysis provides limited additional information.
Threshold selection is described in Appendix~\ref{appendix:twostage_sample}.

\paragraph{Stage II: local shift detection.}
When localization is triggered, let
$\mathcal{D}_Z=\{Z_i\}_{i=1}^{N}$
denote the pooled latent sample. For each candidate point $Z_i$, let
$\mathcal{N}_k(i)$ denote its $k$ nearest neighbors. We compute the smoothed
classifier posterior
\begin{equation}
    \overline h_i
    =
    \frac{1}{k}
    \sum_{j\in\mathcal{N}_k(i)}
    h(Z_j).
    \label{eq:smoothed_classifier_score}
\end{equation}
Under local alignment, $\overline h_i$ is expected to remain close to
$1/2$. Neighborhoods with statistically significant and sufficiently
supported deviations from $1/2$ are retained as candidate mismatch regions.
Effect-size scoring, support constraints, multiple-testing correction, and
region de-duplication are described in
Appendix~\ref{app:local_shift_detection}.

\begin{figure*}[!t]
    \centering
    \includegraphics[width=0.95\textwidth]
    {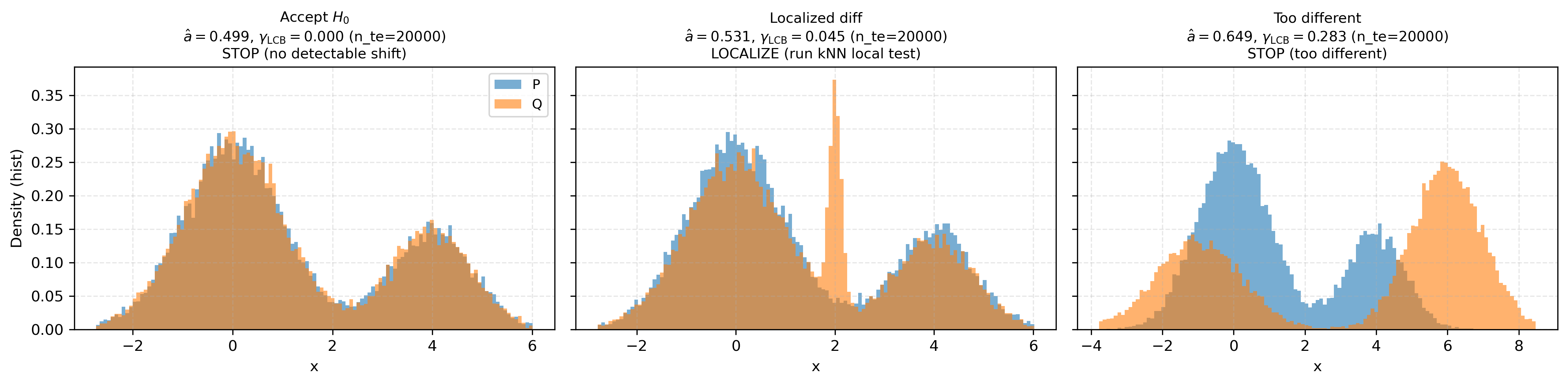}
    \caption{Two-stage two-sample testing under three distribution-shift
    regimes. The method terminates when no shift is detected, localizes
    mismatch regions under moderate shift, and reports a global shift without
    further localization when the distributions are strongly separable.}
    \label{fig:detect_stage}
\end{figure*}

\subsection{Localized Importance Sampling for Metric Calibration}
\label{sec:lis}

Let $ \widehat{\mathcal{S}}
    =
    \bigcup_{k=1}^{K}R_k
    \subseteq\mathcal{Z}$
denote the detected mismatch region. After region de-duplication, overlapping
samples are assigned to the nearest selected region center so that the regions
$\{R_k\}_{k=1}^{K}$ define a unique partition of
$\widehat{\mathcal{S}}$.

Let
\begin{equation*}
    \widehat p_k
    =
    \frac{1}{n_P}
    \sum_{i=1}^{n_P}
    \mathbf{1}\{Z_i^{P}\in R_k\},
    \qquad
    \widehat q_k
    =
    \frac{1}{n_Q}
    \sum_{j=1}^{n_Q}
    \mathbf{1}\{Z_j^{Q}\in R_k\}
\end{equation*}
denote the empirical masses of region $R_k$ under the observed and target
distributions. We estimate a piecewise-constant, clipped density ratio:
\begin{equation}
\widehat w_{\mathrm{loc}}(z)
=
\begin{cases}
\displaystyle
\min\left\{
\frac{\widehat q_k}{\widehat p_k},B
\right\},
& z\in R_k,\\[8pt]
1,
& z\notin\widehat{\mathcal{S}},
\end{cases}
\label{eq:localized_weight}
\end{equation}
where $B>0$ is a clipping threshold. Regions with
$\widehat p_k=0$ are excluded or regularized using a small denominator
constant.

Given observed samples
$X_i^{P}\sim P_X$ and their latent representations
$Z_i^{P}=\phi(X_i^{P})$, we estimate the target performance using the self-normalized estimator
\begin{equation}
    \widehat{\mu}_{Q}^{LIS}
    =
    \frac{
        \sum_{i=1}^{n_P}
        \widehat w_{\mathrm{loc}}(Z_i^{P})
        \ell(X_i^{P})
    }{
        \sum_{i=1}^{n_P}
        \widehat w_{\mathrm{loc}}(Z_i^{P})
    }.
    \label{eq:self_normalized_lis}
\end{equation}
\begin{theorem}[Asymptotic MSE comparison for oracle normalized LIS]
\label{thm:oracle_lis_mse}

Suppose Assumption~\ref{ass:latent_shift} holds and the shift is
confined to a measurable region $S\subseteq\mathcal Z$. Define the
oracle density ratio $ w^\star(z)=\frac{dQ_Z}{dP_Z}(z).$
Since $P_Z$ and $Q_Z$ agree on $S^c$, we have
$w^\star(z)=1$ for $P_Z$-almost every $z\in S^c$, and hence
$
\mathbb E_{P_Z}[w^\star(Z)]=1,
\mathbb E_{P_X}[w^\star(Z)\ell(X)]=\mu_Q.
$

Given $X_1,\ldots,X_n\stackrel{\mathrm{i.i.d.}}{\sim}P_X$ and
$Z_i=\phi(X_i)$, define the oracle normalized LIS estimator
\[
\widehat{\mu}_{Q,\mathrm{or}}^{\mathrm{LIS}}
=
\frac{
    \sum_{i=1}^{n}w^\star(Z_i)\ell(X_i)
}{
    \sum_{i=1}^{n}w^\star(Z_i)
}.
\]

Let $
\sigma_{\mathrm{LIS}}^2
=
\operatorname{Var}_{P_X}
\left[
    w^\star(Z)\bigl(\ell(X)-\mu_Q\bigr)
\right].$
Then
\begin{equation*}
\sqrt n
\left(
    \widehat{\mu}_{Q,\mathrm{or}}^{\mathrm{LIS}}-\mu_Q
\right)
\xrightarrow{d}
\mathcal N(0,\sigma_{\mathrm{LIS}}^2),
\label{eq:oracle_lis_clt}
\end{equation*}
and
\begin{equation*}
\operatorname{MSE}_Q
\left(
    \widehat{\mu}_{Q,\mathrm{or}}^{\mathrm{LIS}}
\right)
=
\frac{\sigma_{\mathrm{LIS}}^2}{n}
+
O\left(\frac{1}{n^2}\right),
\label{eq:oracle_lis_mse}
\end{equation*}
where
$
\operatorname{MSE}_Q(\widehat\mu)
=
\mathbb E\!\left[
    (\widehat\mu-\mu_Q)^2
\right].
$

For the naive estimator
$
\widehat{\mu}_{P}^{\mathrm{naive}}
=
\frac{1}{n}\sum_{i=1}^{n}\ell(X_i),
$
define
$
\mu_P=\mathbb E_{P_X}[\ell(X)],
\Delta=|\mu_P-\mu_Q|,
\sigma_P^2=\operatorname{Var}_{P_X}[\ell(X)].
$
Then
$
\operatorname{MSE}_Q
\left(
    \widehat{\mu}_{P}^{\mathrm{naive}}
\right)
=
\Delta^2+\frac{\sigma_P^2}{n}.
$
Consequently,
\begin{equation*}
\begin{aligned}
&
\operatorname{MSE}_Q
\left(
    \widehat{\mu}_{P}^{\mathrm{naive}}
\right)
-
\operatorname{MSE}_Q
\left(
    \widehat{\mu}_{Q,\mathrm{or}}^{\mathrm{LIS}}
\right)
\\
&\qquad
=
\Delta^2
-
\frac{\sigma_{\mathrm{LIS}}^2-\sigma_P^2}{n}
+
O\left(\frac{1}{n^2}\right).
\end{aligned}
\label{eq:lis_mse_difference}
\end{equation*}

Thus, oracle normalized LIS has a smaller asymptotic MSE whenever
\begin{equation}
\Delta^2
>
\frac{\sigma_{\mathrm{LIS}}^2-\sigma_P^2}{n}
+
O\left(\frac{1}{n^2}\right).
\label{eq:lis_mse_dominance}
\end{equation}
\end{theorem}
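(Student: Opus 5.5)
The plan is to treat the oracle normalized LIS estimator as a ratio of two sample means and apply the delta method for the CLT. For the MSE expansion, a second-order analysis of the ratio is needed, and it requires enough moments of $w^\star$.

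First, the preliminary identities. Since $P_Z$ and $Q_Z$ agree on subsets of $S^c$, the Radon--Nikodym derivative equals $1$ $P_Z$-a.e.\ on $S^c$. Then $\mathbb E_{P_Z}[w^\star]=Q_Z(\mathcal Z)=1$, and $\mathbb E_{P_X}[w^\star(Z)\ell(X)]=\mu_Q$ follows from Theorem~\ref{thm:latent_calibration}.

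Write $A_n=\frac1n\sum_i w^\star(Z_i)\ell(X_i)$ and $D_n=\frac1n\sum_i w^\star(Z_i)$. Then
\[
\widehat\mu_{Q,\mathrm{or}}^{\mathrm{LIS}}-\mu_Q=\frac{A_n-\mu_Q D_n}{D_n}=\frac{\bar U_n}{D_n},
\]
where $U_i=w^\star(Z_i)(\ell(X_i)-\mu_Q)$ are i.i.d.\ with mean $\mu_Q-\mu_Q\cdot 1=0$ and variance $\sigma^2_{\mathrm{LIS}}$. I will assume $\mathbb E_{P_Z}[(w^\star)^2]<\infty$; this is implicit in the finiteness of $\sigma_{\mathrm{LIS}}^2$ and is in practice guaranteed by the clipping at $B$. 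The classical CLT then gives $\sqrt n\,\bar U_n\to\mathcal N(0,\sigma^2_{\mathrm{LIS}})$, and the SLLN gives $D_n\to1$ a.s. Slutsky's lemma yields the stated CLT.

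For the MSE, expand $1/D_n$ around $1$. Write $D_n=1+\bar V_n$ with $V_i=w^\star(Z_i)-1$, so that
\[
\frac{\bar U_n}{D_n}=\bar U_n-\bar U_n\bar V_n+\frac{\bar U_n\bar V_n^2}{D_n}.
\]
Squaring and taking expectations, the leading term is $\mathbb E[\bar U_n^2]=\sigma^2_{\mathrm{LIS}}/n$ exactly. The cross term $\mathbb E[\bar U_n^2\bar V_n]$ is $O(1/n^2)$ by the standard i.i.d.\ moment computation: only diagonal triples $\mathbb E[U^2V]/n^2$ survive. Likewise $\mathbb E[\bar U_n^2\bar V_n^2]=O(1/n^2)$.

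The main obstacle is controlling the remainder involving $1/D_n$, which could be small. The cleanest route is to assume $w^\star\le B$, as in the clipped construction, or more generally enough moments. One can then split on the event $\{D_n\ge 1/2\}$, where the remainder is bounded by a constant times $|\bar U_n|\bar V_n^2$, whose moments are $O(n^{-3/2})$ or better. On the complement, the estimator is a convex combination of the $\ell(X_i)$ and hence lies in $[0,M]$, so the squared error is bounded by $M^2$. The probability of that event is exponentially small by Hoeffding's inequality, applicable since $w^\star\le B$.

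To get a clean $O(1/n^2)$ rather than $O(n^{-3/2})$, I would note that the $n^{-3/2}$-order pieces have zero leading coefficient by the i.i.d.\ moment counting: odd-order products of centered means contribute only at order $n^{-2}$. If this bookkeeping proves delicate, I would state the bound as $o(1/n)$ under weaker assumptions and $O(1/n^2)$ under bounded weights.

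For the naive estimator, a bias--variance decomposition gives $\mathbb E[\widehat\mu_P^{\mathrm{naive}}]=\mu_P$ and variance $\sigma_P^2/n$. Hence the MSE with respect to $\mu_Q$ is exactly $(\mu_P-\mu_Q)^2+\sigma_P^2/n=\Delta^2+\sigma_P^2/n$. Subtracting the two MSE expressions gives the stated difference. The dominance condition~\eqref{eq:lis_mse_dominance} is simply the statement that this difference is positive.
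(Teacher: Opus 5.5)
Your proposal is correct and follows the paper's skeleton. It uses the same exact ratio representation $\widehat\mu^{\mathrm{LIS}}_{Q,\mathrm{or}}-\mu_Q=\overline U_n/(1+\overline V_n)$, the CLT plus LLN plus Slutsky for the limit law, i.i.d.\ moment counting for the MSE, and the bias--variance identity for the naive estimator.

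Where you genuinely differ is in controlling the random denominator. The paper adds the two-sided assumption $0<\underline w\le w^\star\le B$, so $1+\overline V_n\ge\underline w$ deterministically. It then Taylor-expands $(1+v)^{-2}=1-2v+3v^2+R(v)$ with $|R(v)|\le C_f|v|^3$. You instead use the exact identity $1/(1+v)=1-v+v^2/(1+v)$ and assume only $w^\star\le B$. You split on $\{D_n\ge 1/2\}$, using Hoeffding and the fact that a self-normalized estimator lies in $[0,M]$ on the complement.

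This buys real generality. The paper's lower bound forces $P_Z\ll Q_Z$, i.e.\ mutual absolute continuity, which rules out source regions carrying no target mass; your argument does not need this. The price is some extra bookkeeping. Replacing $\mathbb E[\overline U_n^2\mathbf 1\{D_n\ge 1/2\}]$ by $\sigma^2_{\mathrm{LIS}}/n$, and likewise for the other polynomial terms, costs only $O(e^{-cn})$ because $|U_i|\le BM$. Also, $w^\star$ may vanish on a $P_Z$-positive set, so the estimator is undefined on $\{\sum_i w^\star(Z_i)=0\}$. You need a convention there, such as any value in $[0,M]$; this is harmless because that event lies inside $\{D_n<1/2\}$.

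Two small corrections. First, your worry about $n^{-3/2}$ terms is misplaced, and the proposed odd-order cancellation fix would not apply to terms containing $1/D_n$ anyway. The remainder $r_n=\overline U_n\overline V_n^2/D_n$ enters the squared error only through $2\overline U_n r_n$, $-2\overline U_n\overline V_n r_n$ and $r_n^2$. On $\{D_n\ge 1/2\}$ these are bounded by $4\overline U_n^2\overline V_n^2$, $4\overline U_n^2|\overline V_n|^3$ and $4\overline U_n^2\overline V_n^4$. Their expectations are $O(n^{-2})$, $O(n^{-5/2})$ and $O(n^{-3})$ by Cauchy--Schwarz and $\mathbb E[\overline U_n^4]=O(n^{-2})$. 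So $O(1/n^2)$ follows directly, with no hedge to $o(1/n)$ needed.

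Second, $\mathbb E[(w^\star)^2]<\infty$ is not implied by $\sigma^2_{\mathrm{LIS}}<\infty$, since $\ell$ could equal $\mu_Q$ wherever $w^\star$ is large. It is also unnecessary for the CLT, which uses only $\sigma^2_{\mathrm{LIS}}<\infty$ and $\mathbb E_{P_Z}[w^\star]=1$.
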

The theorem characterizes the bias--variance trade-off of localized calibration. The naive estimator retains the squared shift bias $\Delta^2$, whereas oracle LIS removes this bias while potentially changing the sampling variance. LIS therefore achieves a smaller MSE
when the removed bias exceeds the additional variance induced by importance weighting.

\section{Experiments}
We evaluate the proposed framework in two primary settings. First, we use a synthetic benchmark with controlled and known distributions to validate the proposed two-stage testing and calibration pipeline. Second, we use safety-critical cut-in scenarios as the main real-world benchmark for distribution comparison, localized shift detection, and metric calibration. We further apply the framework to a full-scene road-testing dataset containing multiple scenario types to assess its generalization to more diverse driving contexts.

\subsection{Controlled Validation on Synthetic Data}
We generate three synthetic datasets designed to represent different degrees of distributional shift and apply the proposed two-stage two-sample testing framework. As shown in Fig.~\ref{fig:detect_stage}, these three regimes are correctly identified by our method. We further compare the proposed C2ST with
classical global two-sample testing approaches, including the Kolmogorov--Smirnov test~\cite{an1933sulla,smirnov1939estimation},
kernel-based MMD test~\cite{gretton2006kernel,gretton2012kernel},
and the Wasserstein-1 distance test test~\cite{ramdas2017wasserstein}. 

\begin{table}[h]
\centering
\caption{Global two-sample tests on synthetic data.}
\label{tab:synthetic_twosample}
\setlength{\tabcolsep}{4pt} 
\begin{tabular}{l c c c c c}
\toprule
Case 
& Acc 
& $p_{\text{C2ST}}$ 
& $p_{\text{KS}}$ 
& $p_{\text{MMD}}$ 
& $p_{W_1}$ \\
\midrule
Accept $H_0$
& 0.499
& 0.58
& 0.916
& 0.951
& 0.869 \\
Localized diff
& 0.531
& 0.012
& $5.36{\times}10^{-36}$
& 0.016
& 0.016 \\
Too different
& 0.649
& 0.012
& 0
& 0.016
& 0.016 \\
\bottomrule
\end{tabular}
\end{table}

As shown in Table~\ref{tab:synthetic_twosample}, C2ST provides not only a strong global discrepancy signal through the classification advantage, but also sample-wise probabilistic scores that can be directly reused for localized difference detection. This dual capability makes C2ST particularly well suited to our two-stage, band-pass–gated pipeline, which localizes differences only under moderate distribution shifts. Notably, both the localized-difference and “too-different” cases yield identical p-values under permutation-based C2ST and MMD tests due to saturation at the Monte Carlo resolution limit, highlighting a limitation of relying solely on p-values to characterize distributional shifts. In contrast, the classification advantage offers a continuous, effect-size–aware measure that enables principled gating between localizable and global shifts.

\begin{figure}[!h]
    \centering
    \includegraphics[width=0.9\linewidth]{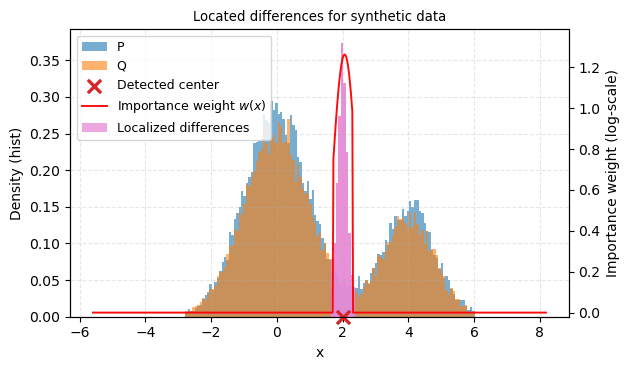}
    \caption{Localized differences and importance weighting on synthetic data. The method identifies a mismatch region (pink) centered at the detected point (red ($\times$)). The estimated importance weight $w(x)$ assigns higher weights to this region, correcting local discrepancies between $P$ and $Q$.}
    \label{fig:syn_local_diff}
\end{figure}
\begin{figure}[!h]
    \centering
    \includegraphics[width=0.98\linewidth,height=7cm]{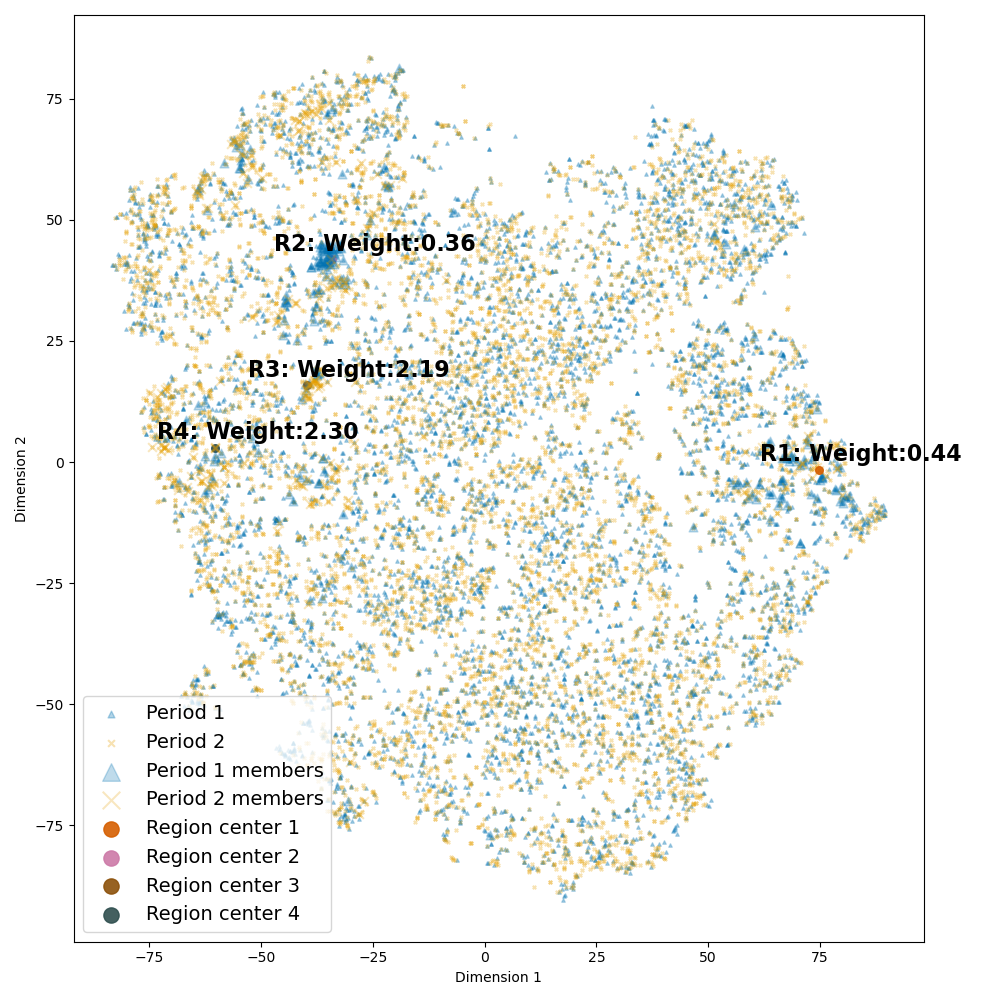}
     \includegraphics[width=0.95\linewidth,height=2cm]{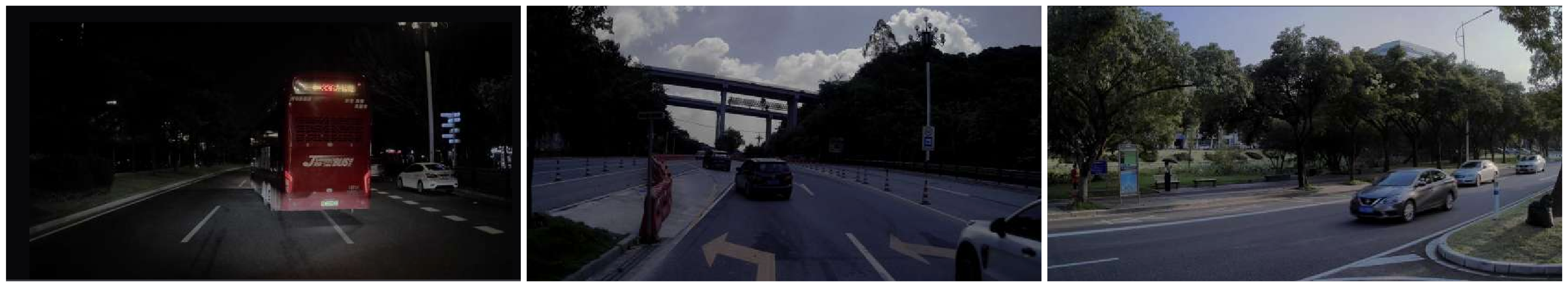}
    \caption{ Localized distribution differences in real-world road-testing data. Top: Latent embeddings from two periods (Period 1 in  blue and Period 2 in orange) with detected mismatch regions (R1–R4). Bottom: representative road-test scenarios sampled from the identified regions. }
    \label{fig:real_local_diff}
\end{figure}
Using the proposed kNN-based localization method in
Section~\ref{sec:two_stage_tst}, we identify the localized distributional shift in the synthetic data, as shown in Fig.~\ref{fig:syn_local_diff}. We define the performance metric as
$\ell(x)=\mathbb{I}(x>1.5)$. Its target expectation under $Q$ is $0.4097$, whereas the naive estimate under $P$ is $0.3466$, yielding an absolute error of $0.0631$. After localized importance sampling, the estimate increases to $0.3915$, reducing the error to $0.0182$, a $71.2\%$ reduction. These results show that LIS effectively corrects the metric bias induced by the localized distribution shift.

\subsection{Real-World Evaluation on Road-Testing Data}
 All data are collected from AV road tests conducted in road-testing mode. For each scenario, we extract synchronized temporal sequences describing the surrounding agents and the ego vehicle from onboard logs. The agent states include position, heading, speed, object type, height, length, and width, while the ego-vehicle states include position, heading, speed, and acceleration. For representation learning, we use one month of road-testing data collected under multiple AV system versions as the training set. 

The evaluation metric is a proprietary in-house measure computed from the ego and surrounding-agent states to approximate driving discomfort. Although the original metric is continuous, we convert it into a binary indicator because our primary interest is whether an uncomfortable-driving event occurs. Specifically, we define $\ell(X)\in\{0,1\},$ where \(\ell(X)=1\) indicates that the discomfort score exceeds a predefined threshold and \(\ell(X)=0\) otherwise. The metric is deployed onboard the test vehicles and evaluated for each recorded scenario.
\paragraph{Structure of the learned embedding space.}
A semantically structured latent space is essential for reliable two-sample testing. We therefore evaluate the learned representation both qualitatively and quantitatively. For the qualitative evaluation, we randomly select anchor scenarios and retrieve their nearest neighbors in the latent space. The retrieved scenarios exhibit similar traffic configurations and interaction patterns, such as low-speed close-range cut-ins, high-speed rear cut-ins, and crossing or wrong-way behaviors. This neighborhood consistency indicates that the encoder organizes scenarios according to their underlying interaction semantics. Additional retrieval examples are provided in Appendix \ref{app:real_exp}.

We further cluster the learned cut-in embeddings into 27 groups and examine how the events of interest are distributed across these clusters. Their cluster-level proportions are clearly non-uniform, with a standard deviation of $3.04$ percentage points.  This heterogeneity suggests that the learned representation distinguishes interaction patterns associated with different levels of safety relevance, rather than producing arbitrary partitions with uniformly distributed events. Moreover, the monthly event distribution is strongly correlated with the independently collected yearly issue distribution, achieving a Pearson correlation of $r=0.758$ ($p=4.57\times10^{-6}$) and Spearman rank correlation  $\rho=0.840$ ($p=4.11\times10^{-8}$). These results indicate substantial agreement in both the relative frequencies and rankings of recurring interaction patterns, providing external evidence that the discovered clusters capture meaningful and temporally consistent safety-related structures.

\paragraph{Global distributional differences.}
We evaluate whether the proposed framework can quantify global distributional shifts in real-world road-testing data. Following Proposition~\ref{prop:classifier_tv}, we train a binary classifier on the learned latent representations to distinguish datasets collected under two settings: different weeks within the same region and different regions within the same week. The autonomous driving system version is fixed in all comparisons to isolate changes in the operational data distribution.

\begin{table}[t]
\centering
\caption{
Global C2ST results under a fixed autonomous driving system version. Here, $\hat{\gamma}$ is the empirical classifier advantage defined in
Eq.~\eqref{eq:empirical_advantage}.
}
\label{tab:real_world_c2st_comparison}
\small
\setlength{\tabcolsep}{6pt}
\begin{tabular}{lcc}
\toprule
Comparison
& ACC
& $2\hat{\gamma}$ \\
\midrule
Cross-week
& $53.3\% \pm 0.9\%$
& $0.067 \pm 0.018$ \\
Cross-region
& $66.5\% \pm 0.6\%$
& $0.330 \pm 0.013$ \\
\bottomrule
\end{tabular}
\end{table}
The two settings exhibit markedly different levels of separability. For cross-week comparisons, the classifier advantage remains small and stabilizes at approximately $\hat{\gamma}=0.030$--$0.035$, corresponding to an average accuracy of $53.4\%$. In contrast, cross-region comparisons yield $\hat{\gamma}=0.150$--$0.165$ and an average accuracy of $66.5\%$. The resulting discrepancies, $2\hat{\gamma}$, are therefore $0.060$--$0.070$ and $0.300$--$0.330$, respectively. These results indicate that temporal shifts within a region are relatively weak, whereas geographic shifts are substantially more pronounced.

This distinction is consistent with the underlying data: regional comparisons capture persistent differences in road structure, traffic composition, and interaction patterns, while week-to-week comparisons primarily reflect shorter-term variation within a shared operational context. The latter can therefore be further examined through localized shift detection and metric calibration. Additional theoretical and experimental analyses are provided in Appendices~\ref{app:lis_theory} and~\ref{appendix:classifier_details}.

\paragraph{Detection of Local Distribution Shifts}
Leveraging the learned latent embeddings, we detect and visualize distributional differences in real-world road-testing data within the latent space, as shown in Fig.~\ref{fig:real_local_diff}. Latent codes from week-to-week data are first reduced to 30 dimensions via PCA and then projected into two dimensions using UMAP \cite{mcinnes2018umap}. The resulting embeddings are colored by period (blue and orange), with localized distributional differences highlighted. The analysis identifies several interpretable sources of distribution shift, including changes in testing routes, such as a higher frequency of U-turns; changes in the dynamic interaction environment, such as an increased occurrence of cut-ins by large vehicles; and changes in the static environment, such as temporary road modifications that induce consistent ego-vehicle deceleration. Additional details is provided in Appendix~\ref{app:vis_latent}.


\paragraph{Metric calibration via LIS}
To mitigate evaluation bias induced by distribution shift, we apply LIS to calibrate metric estimates from the source distribution $P$ to the target distribution $Q$. In this experiment, we use road-testing data collected over three separate weeks. We treat the first period, $P_1$, as the source distribution and calibrate it to subsequent periods, $Q_1$ and $Q_2$. As shown in Fig.~\ref{fig:calibration_comparison}, the LIS-calibrated estimates move consistently toward the
corresponding target ground-truth values. For $Q_1$ and $Q_2$, the remaining relative errors are only $0.96\%$ and $0.50\%$, respectively, demonstrating that LIS effectively corrects the bias induced by temporal changes in the road-testing distribution. 

Additional results on a full-scene road-testing dataset with multiple scenario categories are provided in  Appendix \ref{app:real_exp}.

\begin{figure}[t]
    \centering
    \includegraphics[width=0.45\linewidth,height=3.2cm]{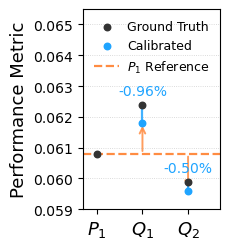}
    \caption{Cross-period metric calibration on real-world road-testing data. LIS adjusts the $P_1$ estimate toward the ground-truth metric values under $Q_1$ and $Q_2$.}
    \label{fig:calibration_comparison}
\end{figure}

\subsection{Ablation Studies}
We evaluate latent representations using three downstream metrics. ACC is the held-out accuracy of the global classifier two-sample test. Jaccard measures overlap between mismatch regions detected by each variant and the base model. LIS-L1 is the absolute error between the LIS-calibrated estimate and the target ground truth, reported in units of $10^{-3}$. For variants that detect no local mismatch regions, Jaccard is reported as zero and LIS-L1 as N/A.

Our default representation uses a 64D latent space. To assess latent capacity, we also train 10D and 128D variants, together with a 64D model using KL regularization. Global detection remains stable across all settings (ACC $\approx$ 0.54--0.56, reject = 100\%), indicating robustness to representation choice. In contrast, local consistency varies non-monotonically with latent capacity: the 10D representation produces fragmented regions with near-zero Jaccard overlap, whereas the 128D and KL-regularized variants detect no mismatch regions, yielding Jaccard = 0 and undefined LIS-L1. These results suggest that insufficient capacity loses interaction-relevant information, while excessive capacity or regularization weakens local density contrast. Overall, 64D provides the best balance between global detection and preservation of local structure. Additional classifier ablations are provided in Appendix \ref{app:real_exp}.

\begin{table}[t]
\centering
\caption{Effect of the latent representation on global detection, local consistency, and calibration (mean $\pm$ standard deviation over three random seeds).}
\label{tab:latent_ablation}
\small
\setlength{\tabcolsep}{3pt}
\begin{tabular}{lccc}
\toprule
Latent setting
& ACC $\uparrow$
& Jaccard $\uparrow$
& LIS-L1 ($\times 10^{-3}$) $\downarrow$ \\
\midrule
latent10
& $\meanstd{0.558}{0.002}$
& $\meanstd{0.00335}{0.001}$
& $\meanstd{1.485}{0.023}$ \\

latent64 (ours)
& $\meanstd{0.547}{0.001}$
& $\meanstd{1.000}{0.000}$
& $\meanstd{1.450}{0.020}$ \\

latent64 + KL
& $\meanstd{0.539}{0.002}$
& $\meanstd{0.000}{0.000}$
& N/A \\

latent128
& $\meanstd{0.541}{0.001}$
& $\meanstd{0.000}{0.000}$
& N/A \\
\bottomrule
\end{tabular}
\end{table}

\section{Conclusion and Future work}
This work takes a first step toward principled, distribution-aware evaluation of
large-scale autonomous driving systems by localizing scenario shifts and calibrating
performance metrics from real-world road testing data. Our results demonstrate that the proposed framework can identify interpretable distributional differences and reduce context-induced evaluation bias. Looking ahead, we aim to extend this framework across the full AV evaluation stack, unifying simulation, closed-loop testing, and real-world deployment
under a common distributional perspective.
In particular, mapping and calibrating distributions between simulated and real-world
domains would enable systematic quantification of the sim-to-real gap, support
risk-aware scenario transfer, and ultimately facilitate scalable, trustworthy end-to-end validation for safety-critical autonomous driving systems.

\newpage
\bibliography{aaai2027}

@book{lehmann2005testing,
  title={Testing statistical hypotheses},
  author={Lehmann, Erich Leo and Romano, Joseph P},
  year={2005},
  publisher={Springer}
}

@article{lopez2016revisiting,
  title={Revisiting classifier two-sample tests},
  author={Lopez-Paz, David and Oquab, Maxime},
  journal={arXiv preprint arXiv:1610.06545},
  year={2016}
}

@article{kim2019global,
  title={Global and local two-sample tests via regression},
  author={Kim, Ilmun and Lee, Ann B and Lei, Jing},
  year={2019}
}

@article{mcinnes2018umap,
  title={Umap: Uniform manifold approximation and projection for dimension reduction},
  author={McInnes, Leland and Healy, John and Melville, James},
  journal={arXiv preprint arXiv:1802.03426},
  year={2018}
}

@article{gretton2012kernel,
  title={A kernel two-sample test},
  author={Gretton, Arthur and Borgwardt, Karsten M and Rasch, Malte J and Sch{\"o}lkopf, Bernhard and Smola, Alexander},
  journal={The journal of machine learning research},
  volume={13},
  number={1},
  pages={723--773},
  year={2012},
  publisher={JMLR. org}
}

@article{jitkrittum2016interpretable,
  title={Interpretable distribution features with maximum testing power},
  author={Jitkrittum, Wittawat and Szab{\'o}, Zolt{\'a}n and Chwialkowski, Kacper P and Gretton, Arthur},
  journal={Advances in Neural Information Processing Systems},
  volume={29},
  year={2016}
}

@article{satterthwaite1946approximate,
  title={An approximate distribution of estimates of variance components},
  author={Satterthwaite, Franklin E},
  journal={Biometrics bulletin},
  volume={2},
  number={6},
  pages={110--114},
  year={1946},
  publisher={JSTOR}
}

@article{ruxton2006unequal,
  title={The unequal variance t-test is an underused alternative to Student's t-test and the Mann--Whitney U test},
  author={Ruxton, Graeme D},
  journal={Behavioral Ecology},
  volume={17},
  number={4},
  pages={688--690},
  year={2006},
  publisher={Oxford University Press}
}

@article{wilcoxon1945individual,
  title={Individual comparisons by ranking methods},
  author={Wilcoxon, Frank},
  journal={Biometrics bulletin},
  volume={1},
  number={6},
  pages={80--83},
  year={1945},
  publisher={JSTOR}
}

@article{mann1947test,
  title={On a test of whether one of two random variables is stochastically larger than the other},
  author={Mann, Henry B and Whitney, Donald R},
  journal={The annals of mathematical statistics},
  pages={50--60},
  year={1947},
  publisher={JSTOR}
}

@article{ramdas2017wasserstein,
  title={On Wasserstein Two-Sample Testing and Related Families of Nonparametric Tests},
  author={Ramdas, Aaditya and Trillos, Nicol{\'a}s and Cuturi, Marco},
  journal={Entropy},
  volume={19},
  number={2},
  pages={47},
  year={2017},
  publisher={MDPI AG}
}

@article{an1933sulla,
  title={Sulla determinazione empirica di una legge didi stribuzione},
  author={An, Kolmogorov},
  journal={Giorn Dell'inst Ital Degli Att},
  volume={4},
  pages={89--91},
  year={1933}
}

@article{smirnov1939estimation,
  title={On the estimation of the discrepancy between empirical curves of distribution for two independent samples},
  author={Smirnov, Nikolai V},
  journal={Bull. Math. Univ. Moscou},
  volume={2},
  number={2},
  pages={3--14},
  year={1939}
}

@article{gretton2006kernel,
  title={A kernel method for the two-sample-problem},
  author={Gretton, Arthur and Borgwardt, Karsten and Rasch, Malte and Sch{\"o}lkopf, Bernhard and Smola, Alex},
  journal={Advances in neural information processing systems},
  volume={19},
  year={2006}
}

@article{duong2013local,
  title={Local significant differences from nonparametric two-sample tests},
  author={Duong, Tarn},
  journal={Journal of Nonparametric Statistics},
  volume={25},
  number={3},
  pages={635--645},
  year={2013},
  publisher={Taylor \& Francis}
}

@article{soriano2017probabilistic,
  title={Probabilistic multi-resolution scanning for two-sample differences},
  author={Soriano, Jacopo and Ma, Li},
  journal={Journal of the Royal Statistical Society Series B: Statistical Methodology},
  volume={79},
  number={2},
  pages={547--572},
  year={2017},
  publisher={Oxford University Press}
}

@article{freeman2017local,
  title={Local two-sample testing: a new tool for analysing high-dimensional astronomical data},
  author={Freeman, PE and Kim, I and Lee, AB},
  journal={Monthly Notices of the Royal Astronomical Society},
  volume={471},
  number={3},
  pages={3273--3282},
  year={2017},
  publisher={Oxford University Press}
}

@standard{ISO26262,
  title        = {{ISO 26262:2018} Road vehicles --- Functional safety},
  organization = {International Organization for Standardization},
  year         = {2018},
  address      = {Geneva, Switzerland}
}

@standard{ISO21448,
  title        = {{ISO 21448:2022} Road vehicles --- Safety of the intended functionality (SOTIF)},
  organization = {International Organization for Standardization},
  year         = {2022},
  address      = {Geneva, Switzerland}
}

@techreport{PEGASUS,
  title        = {PEGASUS Method for the Verification and Validation of Highly Automated Driving Systems},
  author       = {{PEGASUS Project Consortium}},
  institution  = {German Federal Ministry for Economic Affairs and Energy (BMWi)},
  year         = {2019}
}

@article{yan2023learning,
  title={Learning naturalistic driving environment with statistical realism},
  author={Yan, Xintao and Zou, Zhengxia and Feng, Shuo and Zhu, Haojie and Sun, Haowei and Liu, Henry X},
  journal={Nature communications},
  volume={14},
  number={1},
  pages={2037},
  year={2023},
  publisher={Nature Publishing Group UK London}
}

@inproceedings{lindstrom2024nerfs,
  title={Are nerfs ready for autonomous driving? towards closing the real-to-simulation gap},
  author={Lindstr{\"o}m, Carl and Hess, Georg and Lilja, Adam and Fatemi, Maryam and Hammarstrand, Lars and Petersson, Christoffer and Svensson, Lennart},
  booktitle={Proceedings of the IEEE/CVF Conference on Computer Vision and Pattern Recognition},
  pages={4461--4471},
  year={2024}
}

@article{liu2024curse,
  title={Curse of rarity for autonomous vehicles},
  author={Liu, Henry X and Feng, Shuo},
  journal={nature communications},
  volume={15},
  number={1},
  pages={4808},
  year={2024},
  publisher={Nature Publishing Group UK London}
}

@inproceedings{huang2016autonomous,
  title={Autonomous vehicles testing methods review},
  author={Huang, WuLing and Wang, Kunfeng and Lv, Yisheng and Zhu, FengHua},
  booktitle={2016 IEEE 19th International Conference on Intelligent Transportation Systems (ITSC)},
  pages={163--168},
  year={2016},
  organization={IEEE}
}

@article{birchler2025roadmap,
  title={A roadmap for simulation-based testing of autonomous cyber-physical systems: Challenges and future direction},
  author={Birchler, Christian and Khatiri, Sajad and Rani, Pooja and Kehrer, Timo and Panichella, Sebastiano},
  journal={ACM Transactions on Software Engineering and Methodology},
  volume={34},
  number={5},
  pages={1--9},
  year={2025},
  publisher={ACM New York, NY}
}

@article{cai2022survey,
  title={A survey on data-driven scenario generation for automated vehicle testing},
  author={Cai, Jinkang and Deng, Weiwen and Guang, Haoran and Wang, Ying and Li, Jiangkun and Ding, Juan},
  journal={Machines},
  volume={10},
  number={11},
  pages={1101},
  year={2022},
  publisher={MDPI}
}

@inproceedings{wang2024drivedreamer,
  title={Drivedreamer: Towards real-world-drive world models for autonomous driving},
  author={Wang, Xiaofeng and Zhu, Zheng and Huang, Guan and Chen, Xinze and Zhu, Jiagang and Lu, Jiwen},
  booktitle={European conference on computer vision},
  pages={55--72},
  year={2024},
  organization={Springer}
}

@article{feng2021intelligent,
  title={Intelligent driving intelligence test for autonomous vehicles with naturalistic and adversarial environment},
  author={Feng, Shuo and Yan, Xintao and Sun, Haowei and Feng, Yiheng and Liu, Henry X},
  journal={Nature communications},
  volume={12},
  number={1},
  pages={748},
  year={2021},
  publisher={Nature Publishing Group UK London}
}

@book{kallenberg2021foundations,
  author    = {Kallenberg, Olav},
  title     = {Foundations of Modern Probability},
  edition   = {3},
  year      = {2021},
  publisher = {Springer},
  address   = {Cham},
  series    = {Probability Theory and Stochastic Modelling},
  doi       = {10.1007/978-3-030-61871-1}
}

@article{zhao2021detection,
  title={Detection of differentially abundant cell subpopulations in scRNA-seq data},
  author={Zhao, Jun and Jaffe, Ariel and Li, Henry and Lindenbaum, Ofir and Sefik, Esen and Jackson, Ruaidhr{\'\i} and Cheng, Xiuyuan and Flavell, Richard A and Kluger, Yuval},
  journal={Proceedings of the National Academy of Sciences},
  volume={118},
  number={22},
  pages={e2100293118},
  year={2021},
  publisher={National Academy of Sciences}
}

@article{liu2025autonomous,
  title={Autonomous vehicles: A critical review (2004-2024) and a vision for the future},
  author={Liu, Henry and Cao, Zhong and Yan, Xintao and Feng, Shuo and Lu, Qiujing},
  journal={Authorea Preprints},
  year={2025},
  publisher={Authorea}
}

@article{feng2023dense,
  title={Dense reinforcement learning for safety validation of autonomous vehicles},
  author={Feng, Shuo and Sun, Haowei and Yan, Xintao and Zhu, Haojie and Zou, Zhengxia and Shen, Shengyin and Liu, Henry X},
  journal={Nature},
  volume={615},
  number={7953},
  pages={620--627},
  year={2023},
  publisher={Nature Publishing Group UK London}
}
\newpage
\appendix
\section{Theoretical Investigation}
\subsection{Classifier Two-Sample Testing}
\label{app:c2st_theory}

\paragraph{Regression view of C2ST.}
Let $P_Z$ and $Q_Z$ be probability distributions on the measurable
latent space $(\mathcal Z,\mathcal A)$, with densities $p_Z$ and $q_Z$
with respect to a common dominating measure $\lambda$. Consider the
two-sample hypotheses
\[
H_0:P_Z=Q_Z,
\qquad
H_1:P_Z\neq Q_Z.
\]
Introduce a distribution label $Y\in\{0,1\}$ with
\[
\mathbb P(Y=0)=\pi_0,
\qquad
\mathbb P(Y=1)=\pi_1,
\qquad
\pi_0+\pi_1=1,
\]
and let
\[
Z\mid Y=0\sim P_Z,
\qquad
Z\mid Y=1\sim Q_Z.
\]

\begin{lemma}[Regression view of two-sample testing]
\label{lemma:c2st_regression}
The posterior class probability
\[
\eta(z)
:=
\mathbb P(Y=1\mid Z=z)
\]
satisfies
\begin{equation}
\eta(z)
=
\frac{\pi_1q_Z(z)}
{\pi_0p_Z(z)+\pi_1q_Z(z)}
\qquad
\lambda\text{-a.e. }z\in\mathcal Z.
\label{eq:c2st_posterior}
\end{equation}
Moreover, the following statements are equivalent:
\[
P_Z=Q_Z,
\qquad
Y\perp Z,
\qquad
\eta(z)=\pi_1
\quad\lambda\text{-a.e.}
\]
In particular, under equal class priors,
\[
P_Z=Q_Z
\quad\Longleftrightarrow\quad
\eta(z)=\frac12
\quad\lambda\text{-a.e.}
\]
\end{lemma}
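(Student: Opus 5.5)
The plan is to compute the joint law of $(Y,Z)$ explicitly and read everything off it. We will assume $\pi_0,\pi_1\in(0,1)$, since otherwise the labeling is degenerate and the equivalences fail.

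\textbf{Step 1: the joint law and the posterior.} The joint distribution of $(Y,Z)$ has density $\pi_0 p_Z(z)$ at $y=0$ and $\pi_1 q_Z(z)$ at $y=1$, with respect to counting measure times $\lambda$. The marginal density of $Z$ is $m(z)=\pi_0p_Z(z)+\pi_1q_Z(z)$.

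We will verify that the candidate $\eta$ in \eqref{eq:c2st_posterior} is a version of $\mathbb P(Y=1\mid Z)$. On $\{m=0\}$ we set $\eta$ arbitrarily; this set is $\lambda$-null for the law of $Z$, and the lemma's ``$\lambda$-a.e.'' should be read as a.e. on $\{m>0\}$. For any $A\in\mathcal A$ we need
\[
\mathbb E[\eta(Z)\mathbf 1_A(Z)]=\int_A \eta\, m\,d\lambda=\int_A \pi_1 q_Z\,d\lambda=\mathbb P(Y=1,Z\in A).
\]
This is exactly the defining property of conditional expectation, and uniqueness holds up to $m\,d\lambda$-null sets.

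\textbf{Step 2: the three equivalences.} We will argue the cycle below.

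\emph{$P_Z=Q_Z\Rightarrow Y\perp Z$.} If $P_Z=Q_Z$, then $\mathbb P(Y=y,Z\in A)=\pi_y P_Z(A)=\mathbb P(Y=y)\mathbb P(Z\in A)$.

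\emph{$Y\perp Z\Rightarrow \eta=\pi_1$.} Under independence, $\mathbb P(Y=1\mid Z)=\mathbb P(Y=1)=\pi_1$ almost surely.

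\emph{$\eta=\pi_1\Rightarrow P_Z=Q_Z$.} From \eqref{eq:c2st_posterior}, $\pi_1q_Z=\pi_1(\pi_0p_Z+\pi_1q_Z)$ on $\{m>0\}$. This gives $\pi_1\pi_0 q_Z=\pi_1\pi_0 p_Z$, hence $q_Z=p_Z$ there. On $\{m=0\}$ both densities vanish. So $p_Z=q_Z$ $\lambda$-a.e., and therefore $P_Z=Q_Z$.

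The equal-prior statement is the special case $\pi_1=1/2$.

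\textbf{Main obstacle.} The only delicate point is measure-theoretic bookkeeping. The posterior is only defined up to null sets of the $Z$-marginal, not of $\lambda$, so I will state the a.e. qualifier relative to $\{m>0\}$, or note that the convention on $\{m=0\}$ is immaterial. I will also check that division by $\pi_0\pi_1$ is legitimate, which is where nondegenerate priors are used.
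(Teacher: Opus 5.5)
Your proposal is correct and follows the paper's argument: you derive the posterior formula via Bayes' rule, which you justify more carefully through the defining property of conditional expectation, rightly noting that it is pinned down only off $\{m=0\}$ (a set null for the law of $Z$ but not necessarily $\lambda$-null), and you use the same cancellation $\pi_1q_Z=\pi_1(\pi_0p_Z+\pi_1q_Z)\Rightarrow p_Z=q_Z$ relying on $\pi_0,\pi_1>0$. Routing the cycle through $Y\perp Z$ has the small bonus of actually establishing the independence equivalence, which the paper states but its proof never addresses.
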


\begin{proof}
Equation~\eqref{eq:c2st_posterior} follows directly from Bayes'
rule. If $P_Z=Q_Z$, then $p_Z=q_Z$ almost everywhere and hence
$\eta(z)=\pi_1$. Conversely, if $\eta(z)=\pi_1$ almost everywhere,
then
\[
\pi_1q_Z(z)
=
\pi_1\bigl(\pi_0p_Z(z)+\pi_1q_Z(z)\bigr),
\]
which, since $\pi_0,\pi_1>0$, implies
$p_Z(z)=q_Z(z)$ almost everywhere.
\end{proof}

In practice, the learned probabilistic classifier
$h:\mathcal Z\rightarrow[0,1]$ estimates this posterior:
\[
h(z)\approx\eta(z).
\]
Its induced hard decision rule is
\[
g_h(z)
=
\mathbb{I}\!\left\{h(z)\geq\frac12\right\}.
\]

Lemma~\ref{lemma:c2st_regression} provides the regression formulation of
C2ST, while Proposition~\ref{prop:classifier_tv} quantifies the largest
classification advantage attainable under this formulation.

\paragraph{Classifier advantage and total variation.}
Under equal class priors, define the population accuracy of a decision
rule $g:\mathcal Z\rightarrow\{0,1\}$ as
\[
a(g)
=
\frac12P_Z\{g(Z)=0\}
+
\frac12Q_Z\{g(Z)=1\},
\]
and its advantage over random guessing as
\[
\gamma(g)
=
\max\left\{0,a(g)-\frac12\right\}.
\]

\begin{proof}[Proof of Proposition~\ref{prop:classifier_tv}]
Let
\[
A_g=\{z\in\mathcal Z:g(z)=0\}
\]
denote the region assigned to the source distribution. Then
\begin{align}
a(g)
&=
\frac12P_Z(A_g)
+
\frac12Q_Z(A_g^c) \notag\\
&=
\frac12
\left[
1+P_Z(A_g)-Q_Z(A_g)
\right].
\label{eq:c2st_accuracy_set}
\end{align}
Maximizing over all measurable decision regions gives
\[
a^\star(P_Z,Q_Z)
=
\frac12
\left[
1+
\sup_{A\in\mathcal A}
\{P_Z(A)-Q_Z(A)\}
\right].
\]
Because taking complements reverses the sign of
$P_Z(A)-Q_Z(A)$,
\[
\sup_{A\in\mathcal A}
\{P_Z(A)-Q_Z(A)\}
=
\sup_{A\in\mathcal A}
|P_Z(A)-Q_Z(A)|
=
\operatorname{TV}(P_Z,Q_Z).
\]
Therefore,
\[
a^\star(P_Z,Q_Z)
=
\frac12
\left[
1+\operatorname{TV}(P_Z,Q_Z)
\right],
\]
and hence
\[
2\gamma^\star
=
\operatorname{TV}(P_Z,Q_Z).
\]
For any fixed decision rule $g$,
\[
2\gamma(g)
\leq
\operatorname{TV}(P_Z,Q_Z),
\]
because its decision region is one candidate in the supremum defining
total variation.
\end{proof}

\paragraph{Finite-sample uncertainty.}
Let $g$ be a fixed classifier trained independently of a balanced
held-out test set
\[
\{(Z_i,Y_i)\}_{i=1}^{N_{\mathrm{te}}}.
\]
Define its empirical test accuracy as
\[
\widehat a
=
\frac{1}{N_{\mathrm{te}}}
\sum_{i=1}^{N_{\mathrm{te}}}
\mathbb{I}\{g(Z_i)=Y_i\},
\]
and define
\[
\widehat\gamma
=
\max\left\{0,\widehat a-\frac12\right\}.
\]

Conditional on the trained classifier $g$,
\[
\mathbb E[\widehat a\mid g]=a(g),
\qquad
\operatorname{Var}(\widehat a\mid g)
\leq
\frac{1}{4N_{\mathrm{te}}}.
\]
Moreover, Hoeffding's inequality gives, for any
$\delta\in(0,1)$,
\begin{equation}
\mathbb P\left(
|\widehat a-a(g)|
>
\sqrt{
\frac{\log(2/\delta)}
{2N_{\mathrm{te}}}
}
\,\middle|\,g
\right)
\leq\delta.
\label{eq:c2st_accuracy_bound}
\end{equation}
Because the map
$x\mapsto\max\{0,x-1/2\}$ is $1$-Lipschitz, the same bound implies
\begin{equation}
\left|
\widehat\gamma-\gamma(g)
\right|
\leq
\sqrt{
\frac{\log(2/\delta)}
{2N_{\mathrm{te}}}
}
\label{eq:c2st_advantage_bound}
\end{equation}
with probability at least $1-\delta$.

A distribution-free one-sided lower confidence bound is therefore
\[
\gamma_{\mathrm{LCB}}^{\mathrm{H}}
=
\max\left\{
0,
\widehat a-\frac12
-
\sqrt{
\frac{\log(1/\delta)}
{2N_{\mathrm{te}}}
}
\right\}.
\]
In implementation, we use the tighter Wilson lower confidence bound
defined in Appendix~\ref{app:bandpass_thresholds}.

\paragraph{Training and test-set variability.}
Let $G$ denote the random classifier produced by the training
procedure, including the effects of training-data sampling,
initialization, and optimization. The law of total variance gives
\begin{equation}
\operatorname{Var}(\widehat a)
=
\mathbb E_G
\left[
\operatorname{Var}(\widehat a\mid G)
\right]
+
\operatorname{Var}_G
\left[
a(G)
\right].
\label{eq:c2st_variance_decomposition}
\end{equation}
The first term represents finite test-set uncertainty and is bounded
by $1/(4N_{\mathrm{te}})$, whereas the second term captures
training-induced variability. We therefore report the mean and
standard deviation of $\widehat\gamma$ over repeated classifier
training runs, while separately analyzing test-set uncertainty through
Eq.~\eqref{eq:c2st_accuracy_bound}.

\subsection{Band-Pass Threshold Selection}
\label{app:bandpass_thresholds}
\subsubsection{Selection of Band-Pass Thresholds for Two-Stage Two-Sample Testing}
\label{appendix:twostage_sample}

This section explains how the lower and upper thresholds,
$\gamma_{\mathrm{small}}$ and $\gamma_{\mathrm{large}}$,
are chosen for the band-pass decision rule that determines whether localized
two-sample testing is performed.

Recall that we define the classifier-based shift magnitude proxy as
\begin{equation}
\widehat{\gamma}
=
\max\left\{
0,\widehat a-\frac12
\right\}.
\end{equation}
where $\hat a$ is the held-out classification accuracy of C2ST.
To account for finite-sample uncertainty, we use the lower confidence bound
\begin{equation}
\gamma_{\mathrm{LCB}}
\;\triangleq\;
\max\!\left\{0,\; \mathrm{LCB}_{1-\alpha}(\hat a;n_{\mathrm{te}})-\tfrac12 \right\},
\end{equation}
where $\mathrm{LCB}_{1-\alpha}(\hat a;n_{\mathrm{te}})$ denotes the
$(1-\alpha)$ Wilson lower confidence bound on the accuracy computed from
$n_{\mathrm{te}}$ test samples.
The quantity $\gamma_{\mathrm{LCB}}$ provides a conservative, sample-size-aware
estimate of the classifier-detectable discrepancy between $P$ and $Q$.

\paragraph{Lower threshold: $\gamma_{\mathrm{small}}$ (minimum localizable effect size).}
The lower threshold $\gamma_{\mathrm{small}}$ defines the minimum global shift magnitude that reliably exceeds statistical noise and therefore warrants localization. We set
$\gamma_{\mathrm{small}}=0.01.$ Under the null hypothesis $P=Q$, the expected classifier accuracy is $0.5$, and
for typical test sizes used in our experiments ($N_{\mathrm{te}}\approx 20{,}000$),
the $95\%$ confidence interval half-width is approximately $0.005$. Consequently, an observed accuracy of $\hat a\approx0.51$ lies just beyond the noise floor and represents the smallest shift that can be distinguished from chance with high confidence.

\paragraph{Upper threshold: $\gamma_{\mathrm{large}}$ (saturation regime).}
When the classifier advantage is large, the distributions are broadly separable and most neighborhoods exhibit systematic imbalance. In this regime, local analysis provides limited additional information and may produce diffuse, difficult-to-interpret regions. Empirically, classifier accuracies around $\hat a\approx0.6$
(i.e., $\gamma=0.1$) correspond to cases where $P$ and $Q$ are easily separable,
indicating widespread distributional differences rather than localized anomalies.
In this regime, nearly all neighborhoods exhibit strong imbalance, and the
classifier signal saturates, making fine-grained localization unstable and less interpretable.

\paragraph{Band-pass rationale.}
Together, $\gamma_{\mathrm{small}}$ and $\gamma_{\mathrm{large}}$ define a
band-pass region
\[
\gamma_{\mathrm{small}} \;\le\; \gamma_{\mathrm{LCB}} \;\le\; \gamma_{\mathrm{large}},
\]
within which localized two-sample testing is both statistically justified and
practically informative.
By basing the decision on $\gamma_{\mathrm{LCB}}$, the proposed gating mechanism
adapts naturally to test-set size and controls false localization under negligible
shifts, while avoiding over-interpretation when the shift is overwhelmingly
global.

\subsection{Local Shift Detection}
\label{app:local_shift_detection}

When localization is triggered by the global test, we perform local
testing on a balanced held-out dataset
\[
\mathcal D_{\mathrm{loc}}
=
\{(Z_i,Y_i)\}_{i=1}^{N},
\]
where $Y_i=0$ denotes a source sample from $P_Z$ and $Y_i=1$
denotes a target sample from $Q_Z$. 

\paragraph{Neighborhood posterior and effect size.}
For each candidate point $Z_i$, let $\mathcal N_k(i)$ denote its
$k$ nearest neighbors in the latent space. We compute the smoothed
classifier posterior
\begin{equation*}
    \bar h_i
    =
    \frac{1}{k}
    \sum_{j\in\mathcal N_k(i)}
    h(Z_j).
\end{equation*}
To stabilize the log-odds calculation, define
\[
    \widetilde h_i
    =
    \operatorname{clip}
    \left(
        \bar h_i,\epsilon_h,1-\epsilon_h
    \right),
\]
where $\epsilon_h>0$ is a small constant. The local effect-size score is
\begin{equation}
    s_i
    =
    \left|
        \log
        \frac{\widetilde h_i}
             {1-\widetilde h_i}
    \right|.
    \label{eq:local_effect_size}
\end{equation}
The sign of $\bar h_i-1/2$ indicates the direction of the shift:
positive values correspond to target-heavy neighborhoods and negative
values to source-heavy neighborhoods.

\paragraph{Local significance test.}
Let $ c_i = \sum_{j\in\mathcal N_k(i)}Y_j$
denote the number of target samples in the neighborhood. Under the local-alignment null hypothesis
\[
    H_{0,i}:
    \mathbb P
    \left(
        Y=1\mid Z\in\mathcal N_k(i)
    \right)
    =
    \frac12,
\]
we use the working model
\[
    C_i\sim\operatorname{Binomial}
    \left(k,\frac12\right).
\]
The corresponding two-sided $p$-value is
\begin{equation}
    p_i
    =
    \mathbb P
    \left(
        \left|C_i-\frac{k}{2}\right|
        \geq
        \left|c_i-\frac{k}{2}\right|
    \right).
    \label{eq:local_binomial_pvalue}
\end{equation}

\paragraph{Neighborhood support.}
Reliable density-ratio estimation requires both distributions to have
sufficient support within a selected region. Let
$m_{\min}=\lceil\epsilon k\rceil$, where
$\epsilon\in[0,0.05]$, and define $\mathcal I
    =
    \left\{
        i:
        m_{\min}
        \leq c_i
        \leq k-m_{\min}
    \right\}.
$ Significant neighborhoods that fail this condition are treated as local
overlap violations and are excluded from importance-weight estimation.

\paragraph{Multiple-testing correction.}
We apply the Benjamini--Hochberg procedure to
$\{p_i\}_{i\in\mathcal I}$ at level $\alpha_{\mathrm{loc}}$.
Let $q_i$ denote the adjusted $p$-value and define
$ \mathcal I_{\alpha}
    =
    \left\{
        i\in\mathcal I:
        q_i\leq\alpha_{\mathrm{loc}}
    \right\}.$
The accepted candidates are ranked by the effect-size score
$s_i$ in Eq.~\eqref{eq:local_effect_size}.

\paragraph{Region selection and de-duplication.}
Let $r_i$ denote the distance from $Z_i$ to its $k$th nearest
neighbor. Starting from the highest-ranked candidate, we greedily
select a center and suppress any remaining candidate $j$ satisfying
\begin{equation}
    \|Z_j-Z_i\|_2
    \leq
    \lambda r_i,
    \label{eq:local_nms}
\end{equation}
where $\lambda>0$ controls the suppression radius. Let
$\mathcal C=\{i_1,\ldots,i_K\}$ denote the surviving centers and define
the preliminary regions
\[
    \widetilde R_k
    =
    \left\{
        Z_j:
        j\in\mathcal N_k(i_k)
    \right\}.
\]
Their union defines the detected mismatch set
\[
    \widehat S
    =
    \bigcup_{k=1}^{K}\widetilde R_k.
\]
When preliminary regions overlap, each sample in $\widehat S$ is
assigned to its nearest selected center:
\begin{equation*}
    R_k
    =
    \left\{
        z\in\widehat S:
        k
        =
        \arg\min_{\ell\in\{1,\ldots,K\}}
        \|z-Z_{i_\ell}\|_2
    \right\}.
\end{equation*}
The resulting regions are disjoint and satisfy
\[
    \widehat S
    =
    \bigcup_{k=1}^{K}R_k,
    \qquad
    R_k\cap R_\ell=\varnothing
    \quad
    (k\neq\ell),
\]
providing the regions used for localized importance weighting in
Section~\ref{sec:lis}.

\subsection{Proofs for Latent Metric Calibration}
\label{app:lis_theory}

This section proves the latent calibration identity and the asymptotic
MSE result for oracle normalized LIS. Throughout, let
\[
w^\star(z)
=
\frac{dQ_Z}{dP_Z}(z),
\qquad
\mu_Q
=
\mathbb E_{Q_X}[\ell(X)].
\]
For the second-order MSE expansion, we additionally assume
\[
0\leq \ell(X)\leq M,
0<\underline w
\leq
w^\star(Z)
\leq B<\infty,
P_Z\text{-almost surely}.
\]

\paragraph{Proof of the latent calibration identity.}

\begin{proof}[Proof of Theorem~\ref{thm:latent_calibration}]
Under Assumption~\ref{ass:latent_shift}, define
\[
m(z)
=
\mathbb E_{P_X}[\ell(X)\mid Z=z]
=
\mathbb E_{Q_X}[\ell(X)\mid Z=z].
\]
By the tower property and the Radon--Nikodym change-of-measure
identity,
\begin{align*}
\mu_Q
&=
\mathbb E_{Q_Z}[m(Z)]
\notag\\
&=
\mathbb E_{P_Z}
\left[
w^\star(Z)m(Z)
\right]
\notag\\
&=
\mathbb E_{P_X}
\left[
w^\star(Z)\ell(X)
\right].
\end{align*}

Suppose further that the shift is confined to
$\mathcal{S}\subseteq\mathcal Z$, so that
\[
P_Z(A)=Q_Z(A)
\qquad
\text{for every measurable }A\subseteq \mathcal{S}^c.
\]
The restrictions of $P_Z$ and $Q_Z$ to $S^c$ are therefore equal,
which implies
\[
w^\star(z)=1
\qquad
P_Z\text{-almost everywhere on }\mathcal{S}^c.
\]
Splitting the expectation over $\mathcal{S}$ and $\mathcal{S}^c$ yields
\begin{align}
\mu_Q
={}&
\mathbb E_{P_X}
\left[
\ell(X)\mathbb{I}\{Z\notin \mathcal{S}\}
\right]
\notag\\
&+
\mathbb E_{P_X}
\left[
w^\star(Z)\ell(X)
\mathbb{I}\{Z\in \mathcal{S}\}
\right],
\end{align}
which proves the localized calibration identity.
\end{proof}

\paragraph{Proof of the oracle normalized LIS result.}

\begin{proof}[Proof of Theorem~\ref{thm:oracle_lis_mse}]
Let
\[
X_1,\ldots,X_n
\stackrel{\mathrm{i.i.d.}}{\sim}P_X,
\qquad
Z_i=\phi(X_i),
\]
and define
\[
U_i
=
w^\star(Z_i)
\bigl(\ell(X_i)-\mu_Q\bigr),
\qquad
V_i
=
w^\star(Z_i)-1.
\]
By Theorem~\ref{thm:latent_calibration},
\[
\mathbb E_{P_X}[U_i]=0,
\qquad
\mathbb E_{P_Z}[V_i]=0.
\]
Let
\[
\overline U_n
=
\frac{1}{n}\sum_{i=1}^{n}U_i,
\qquad
\overline V_n
=
\frac{1}{n}\sum_{i=1}^{n}V_i.
\]
The oracle normalized LIS error can then be written exactly as
\begin{equation}
\widehat{\mu}_{Q,\mathrm{or}}^{\mathrm{LIS}}
-\mu_Q
=
\frac{\overline U_n}
     {1+\overline V_n}.
\label{eq:oracle_lis_ratio_error}
\end{equation}

Define
\[
\sigma_{\mathrm{LIS}}^2
=
\operatorname{Var}_{P_X}
\left[
w^\star(Z)
\bigl(\ell(X)-\mu_Q\bigr)
\right]
=
\mathbb E_{P_X}[U_i^2].
\]
Because $U_i$ has finite variance, the central limit theorem gives
\[
\sqrt n\,\overline U_n
\xrightarrow{d}
\mathcal N(0,\sigma_{\mathrm{LIS}}^2).
\]
Moreover, the law of large numbers implies
$\overline V_n\to 0$ in probability. Applying Slutsky's theorem to
Eq.~\eqref{eq:oracle_lis_ratio_error} gives
\[
\sqrt n
\left(
\widehat{\mu}_{Q,\mathrm{or}}^{\mathrm{LIS}}
-\mu_Q
\right)
\xrightarrow{d}
\mathcal N(0,\sigma_{\mathrm{LIS}}^2).
\]

We next establish the second-order MSE expansion. From
Eq.~\eqref{eq:oracle_lis_ratio_error},
\begin{equation}
\operatorname{MSE}_Q
\left(
\widehat{\mu}_{Q,\mathrm{or}}^{\mathrm{LIS}}
\right)
=
\mathbb E
\left[
\frac{\overline U_n^2}
     {(1+\overline V_n)^2}
\right].
\label{eq:oracle_lis_mse_ratio}
\end{equation}
The lower weight bound implies
\[
1+\overline V_n
=
\frac{1}{n}
\sum_{i=1}^{n}w^\star(Z_i)
\geq
\underline w,
\]
so the random denominator is uniformly bounded away from zero.

For
\[
f(v)=(1+v)^{-2},
\]
a second-order Taylor expansion around $v=0$ gives
\[
f(v)
=
1-2v+3v^2+R(v),
\qquad
|R(v)|\leq C_f|v|^3,
\]
where $C_f<\infty$ depends only on
$\underline w$ and $B$. Therefore,
\begin{align}
\operatorname{MSE}_Q
\left(
\widehat{\mu}_{Q,\mathrm{or}}^{\mathrm{LIS}}
\right)
={}&
\mathbb E[\overline U_n^2]
-
2\mathbb E[\overline U_n^2\overline V_n]
\notag\\
&+
3\mathbb E[\overline U_n^2\overline V_n^2]
+
\mathbb E[\overline U_n^2R(\overline V_n)].
\label{eq:oracle_lis_taylor}
\end{align}
Because $U_i$ and $V_i$ are centered, bounded, and i.i.d.,
\[
\mathbb E[\overline U_n^2]
=
\frac{\sigma_{\mathrm{LIS}}^2}{n},
\]
\[
\mathbb E[\overline U_n^2\overline V_n]
=
\frac{\mathbb E[U_i^2V_i]}{n^2},
\]
and standard moment counting yields
\[
\mathbb E[\overline U_n^2\overline V_n^2]
=
O\left(\frac{1}{n^2}\right),
\qquad
\mathbb E
\left[
\overline U_n^2
|R(\overline V_n)|
\right]
=
O\left(\frac{1}{n^{5/2}}\right).
\]
Substituting these relations into
Eq.~\eqref{eq:oracle_lis_taylor} gives
\begin{equation}
\operatorname{MSE}_Q
\left(
\widehat{\mu}_{Q,\mathrm{or}}^{\mathrm{LIS}}
\right)
=
\frac{\sigma_{\mathrm{LIS}}^2}{n}
+
O\left(\frac{1}{n^2}\right).
\label{eq:appendix_oracle_lis_mse}
\end{equation}

For the naive estimator,
\[
\widehat{\mu}_{P}^{\mathrm{naive}}
=
\frac{1}{n}
\sum_{i=1}^{n}\ell(X_i),
\]
let
\[
\mu_P
=
\mathbb E_{P_X}[\ell(X)],
\Delta
=
|\mu_P-\mu_Q|,
\sigma_P^2
=
\operatorname{Var}_{P_X}[\ell(X)].
\]
Its MSE relative to $\mu_Q$ is exactly
\begin{equation}
\operatorname{MSE}_Q
\left(
\widehat{\mu}_{P}^{\mathrm{naive}}
\right)
=
\Delta^2
+
\frac{\sigma_P^2}{n}.
\label{eq:appendix_naive_mse}
\end{equation}
Combining Eqs.~\eqref{eq:appendix_oracle_lis_mse}
and~\eqref{eq:appendix_naive_mse} gives
\begin{align}
&
\operatorname{MSE}_Q
\left(
\widehat{\mu}_{P}^{\mathrm{naive}}
\right)
-
\operatorname{MSE}_Q
\left(
\widehat{\mu}_{Q,\mathrm{or}}^{\mathrm{LIS}}
\right)
\notag\\
&\qquad
=
\Delta^2
-
\frac{
\sigma_{\mathrm{LIS}}^2-\sigma_P^2
}{n}
+
O\left(\frac{1}{n^2}\right).
\label{eq:appendix_lis_mse_difference}
\end{align}

More formally, there exist constants $C_{\mathrm{LIS}}<\infty$
and $n_0$ such that, for all $n\geq n_0$,
\[
\left|
\operatorname{MSE}_Q
\left(
\widehat{\mu}_{Q,\mathrm{or}}^{\mathrm{LIS}}
\right)
-
\frac{\sigma_{\mathrm{LIS}}^2}{n}
\right|
\leq
\frac{C_{\mathrm{LIS}}}{n^2}.
\]
Hence, oracle normalized LIS has a smaller MSE whenever
\begin{equation}
\Delta^2
>
\frac{
\sigma_{\mathrm{LIS}}^2-\sigma_P^2
}{n}
+
\frac{C_{\mathrm{LIS}}}{n^2}.
\label{eq:appendix_lis_dominance}
\end{equation}
\end{proof}

\paragraph{Connection to the practical LIS estimator.}

The preceding result assumes oracle knowledge of the true density ratio
$w^\star$ and the true mismatch region $S$. In practice, they are
replaced by the detected region $\widehat S$ and the estimated,
piecewise-constant weight $\widehat w_{\mathrm{loc}}$.

On the same evaluation sample, write
\[
w_i^\star=w^\star(Z_i),
\qquad
\widehat w_i
=
\widehat w_{\mathrm{loc}}(Z_i).
\]
Then the practical estimation error admits the exact decomposition
\begin{align}
\widehat{\mu}_{Q}^{\mathrm{LIS}}-\mu_Q
={}&
\left(
\widehat{\mu}_{Q}^{\mathrm{LIS}}
-
\widehat{\mu}_{Q,\mathrm{or}}^{\mathrm{LIS}}
\right)
\notag\\
&+
\left(
\widehat{\mu}_{Q,\mathrm{or}}^{\mathrm{LIS}}
-\mu_Q
\right).
\label{eq:practical_oracle_decomposition}
\end{align}
The first term captures region-detection, density-ratio estimation,
and clipping errors, while the second is the oracle sampling error
analyzed above.

Provided that $\sum_i\widehat w_i>0$, the difference between the
practical and oracle estimators satisfies the deterministic bound
\begin{equation}
\left|
\widehat{\mu}_{Q}^{\mathrm{LIS}}
-
\widehat{\mu}_{Q,\mathrm{or}}^{\mathrm{LIS}}
\right|
\leq
2M
\frac{
\sum_{i=1}^{n}
|\widehat w_i-w_i^\star|
}{
\sum_{i=1}^{n}\widehat w_i
}.
\label{eq:practical_oracle_weight_bound}
\end{equation}
Thus, the practical estimator approaches its oracle counterpart when
the estimated localized weights converge to the true density ratio and
their empirical normalization remains bounded away from zero. The
oracle theorem isolates the sampling and self-normalization uncertainty;
a complete finite-sample analysis of the practical estimator would
additionally require rates for region detection and density-ratio
estimation.

\section{Additional Experiments}
 
\subsection{1D Synthetic Experiment: Sensitivity of Shift Detection}
To evaluate the classifier-detectable discrepancy
$2\widehat\gamma$ and compare its sensitivity with standard
two-sample statistics, we construct a controlled one-dimensional
benchmark. We specifically investigate how different types of distributional shifts impact the evaluation of a binary performance metric.

\paragraph{Data Generation and Performance Metric}
The reference distribution $P$ and three target distributions $Q$ are generated using Gaussian Mixture Models (GMMs). For each case, we sample $n = 20,000$ points. The performance metric is defined as the indicator function $\ell(x) = \mathbb{I}(x > 1.5)$. The scenarios are constructed as follows:
\begin{itemize}
    \item \textbf{Case 1 (No Shift):} $Q$ is identical to $P$, where $P \sim 0.7\mathcal{N}(0, 1) + 0.3\mathcal{N}(4, 0.8)$.
    \item \textbf{Case 2 (Localized Difference):} $Q$ introduces a localized ``bump'' of contamination via an $\epsilon$-mixture ($10\%$):
    \[ Q \sim (1 - \epsilon)P + \epsilon\mathcal{N}(2.0, 0.12). \]
    This simulates a scenario where a specific environmental condition is over-represented.
    \item \textbf{Case 3 (Global Shift):} $Q$ represents a significant structural change in the context distribution:
    \[ Q \sim 0.4\mathcal{N}(-1.0, 1.2) + 0.6\mathcal{N}(6.0, 1.0). \]
\end{itemize}

\paragraph{Statistical Implementation}
We evaluate the shift using four distinct two-sample testing frameworks, using permutation tests ($N_{perm} \in [60, 80]$) to calculate $p$-values:

\begin{enumerate}
    \item \textbf{Classifier Two-Sample Test (C2ST):} We train a kNN classifier with a \texttt{StandardScaler} pipeline on a 50/50 train-test split. The test accuracy $\widehat{\text{acc}}$ provides the empirical basis for the shift bound $2\hat{\gamma}$, with the $p$-value determined via label permutation.
    \item \textbf{Kolmogorov-Smirnov (KS) Test:} We compute the two-sided KS statistic, defined as the maximum absolute difference between the empirical cumulative distribution functions (ECDFs) of $P$ and $Q$:
    \[ D_{n,m} = \sup_{x} |F_{n,P}(x) - F_{m,Q}(x)|. \]
    The $p$-value is calculated using the asymptotic distribution to assess the probability of the observed discrepancy under $H_0$.
    \item \textbf{MMD with Random Fourier Features (MMD-RFF):} We approximate the Gaussian kernel using $D=128$ random features. The kernel bandwidth $\sigma$ is determined via the \textit{median heuristic} on a sub-sample to adapt to the data scale.
    \item \textbf{Wasserstein-1 Distance ($W_1$):} We compute the 1D Earth Mover's Distance using the closed-form identity for sorted samples:
    \[ \widehat{W}_1(P, Q) = \frac{1}{n} \sum_{i=1}^n \big| x_{(i)} - y_{(i)} \big|. \]
\end{enumerate}



\subsection{Real-world dataset experiments}\label{app:real_exp}
\subsubsection{Cut-In Road-Testing Experiments}
\label{sec:real_world_exp}
\paragraph{Dataset.}
The cut-in dataset contains approximately 177,000 road-test segments, each spanning 10--15 seconds. These segments are extracted from long-horizon onboard logs and cover diverse cut-in geometries, traffic densities, vehicle types, and interaction patterns.

The representation-learning set consists of one month of data collected under multiple AV system versions. AV-version identifiers are not provided to the encoder. For downstream distribution comparison, all source--target pairs are constructed under a fixed AV system version so that the detected differences primarily reflect changes in scenario exposure rather than policy changes.
\paragraph{Representation learning.}
We train the Transformer-based encoder--decoder described in
Section~\ref{sec:ae_impl} to obtain a
64-dimensional scene representation. The model operates on
trajectory-level inputs extracted from real-world road-testing logs,
including synchronized ego-vehicle states, surrounding-agent states,
and road-context features. Each scenario contains up to 128 agents over
a temporal horizon of 15 steps.

The encoder comprises three main components. First, separate agent and
ego encoders project the raw inputs into a shared feature space with
$d_{\mathrm{model}}=128$. Second, two cross-attention modules capture
complementary ego--agent and ego--road interactions. Third, a temporal
Transformer encoder with five layers and eight attention heads
aggregates the interaction-aware features over time. The resulting
sequence is temporally pooled and projected into a 64-dimensional
scene-level latent embedding.

The decoder follows a Transformer-based architecture. The latent
embedding is first projected into temporal memory tokens and then
decoded to reconstruct behavior-relevant scenario information,
including all-agent trajectories, cut-in-agent features, ego--agent
relative features, and agent-validity masks. The reconstructed
trajectory output has dimension two per agent, with a hidden dimension
of 128. Additional MLP heads predict auxiliary dynamic attributes,
including time to collision (TTC) and other interaction-related
variables.

The model is optimized using
\[
\mathcal L_{\mathrm{train}}
=
\mathcal L_{\mathrm{recon}}
+
\lambda_{\mathrm{attr}}\mathcal L_{\mathrm{attr}},
\]
where $\mathcal L_{\mathrm{recon}}$ measures reconstruction error and
$\mathcal L_{\mathrm{attr}}$ supervises the prediction of
behavior-relevant attributes from the latent representation.

The model is trained using Adam with a learning rate of
$10^{-4}$, a batch size of 256, and 5,000 epochs. Training is performed
on a single NVIDIA H100 GPU and requires approximately two weeks. After
training, the learned latent representations are used for global
distribution comparison, localized shift detection, metric calibration,
and analysis across AV software versions.

\subsubsection{Visualization of the Learned Latent Space}
\label{app:vis_latent}

\begin{figure*}[t]
  \centering
  \includegraphics[width=\textwidth]
  {figures/train_test_latent_space_comparison.pdf}
  \caption{
  Two-dimensional visualization of the learned cut-in representation.
  The latent codes are reduced using PCA followed by UMAP. The top and
  bottom rows show training and held-out data, respectively, colored by
  ego speed, cut-in-agent speed, and relative distance.
  }
  \label{fig:latent_space_umap}
\end{figure*}

We qualitatively examine whether the learned representation preserves
behaviorally relevant scenario information. Each cut-in event is first
encoded into the 64-dimensional latent space, reduced to 30 dimensions
using PCA, and subsequently projected into two dimensions using UMAP.
A common projection pipeline is used for the training and held-out
datasets.

As shown in Fig.~\ref{fig:latent_space_umap}, scenarios with similar
ego speeds, cut-in-agent speeds, and relative distances occupy
structured regions of the latent space. These semantic attributes are
obtained through rule-based post-processing and manual verification;
they are used only for visualization and are not provided to the
encoder during training. The consistent organization observed across
the training and held-out splits provides qualitative evidence that the
encoder preserves behaviorally relevant contextual information.

For an additional qualitative evaluation, we randomly select anchor
scenarios and retrieve their nearest neighbors in the latent space.
The retrieved examples exhibit similar traffic configurations and
interaction patterns, including low-speed close-range cut-ins and
high-speed long-range cut-ins with one or more additional vehicles positioned between the ego vehicle and the cut-in agent, as shown in Fig.\ref{fig:latent_retrieval}.

\begin{figure*}[!h]
  \centering
  \includegraphics[width=\textwidth]{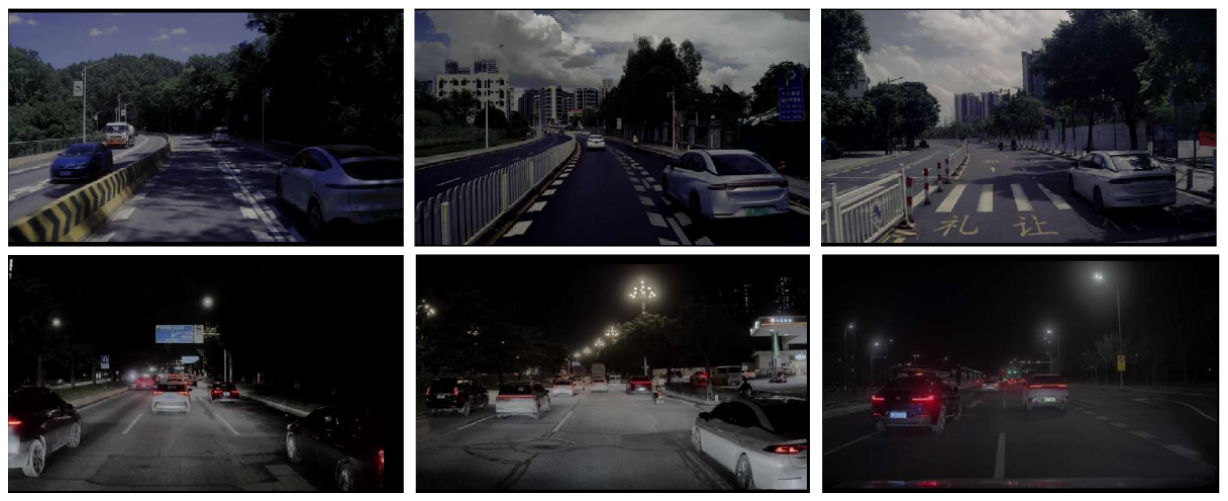}
  \caption{
  Nearest-neighbor retrieval examples. Each row shows a randomly
  selected anchor scenario followed by its nearest neighbors in the
  learned latent space.
  }
  \label{fig:latent_retrieval}
\end{figure*}

\paragraph{Cluster-level validation.}
We further cluster the cut-in embeddings into 27 groups and examine the distribution of events of interest across the resulting clusters. The event shares are clearly non-uniform across clusters, with a standard deviation of $3.04$ percentage points. Moreover, the cluster-level event rates exhibit substantial heterogeneity, with a standard deviation of $8.89$ percentage points. These results indicate that the learned representation organizes scenarios into interaction patterns associated with distinct empirical risk levels, rather than producing arbitrary or uniformly distributed partitions.
\begin{figure*}[!h]
  \centering
  \includegraphics[width=\textwidth]
  {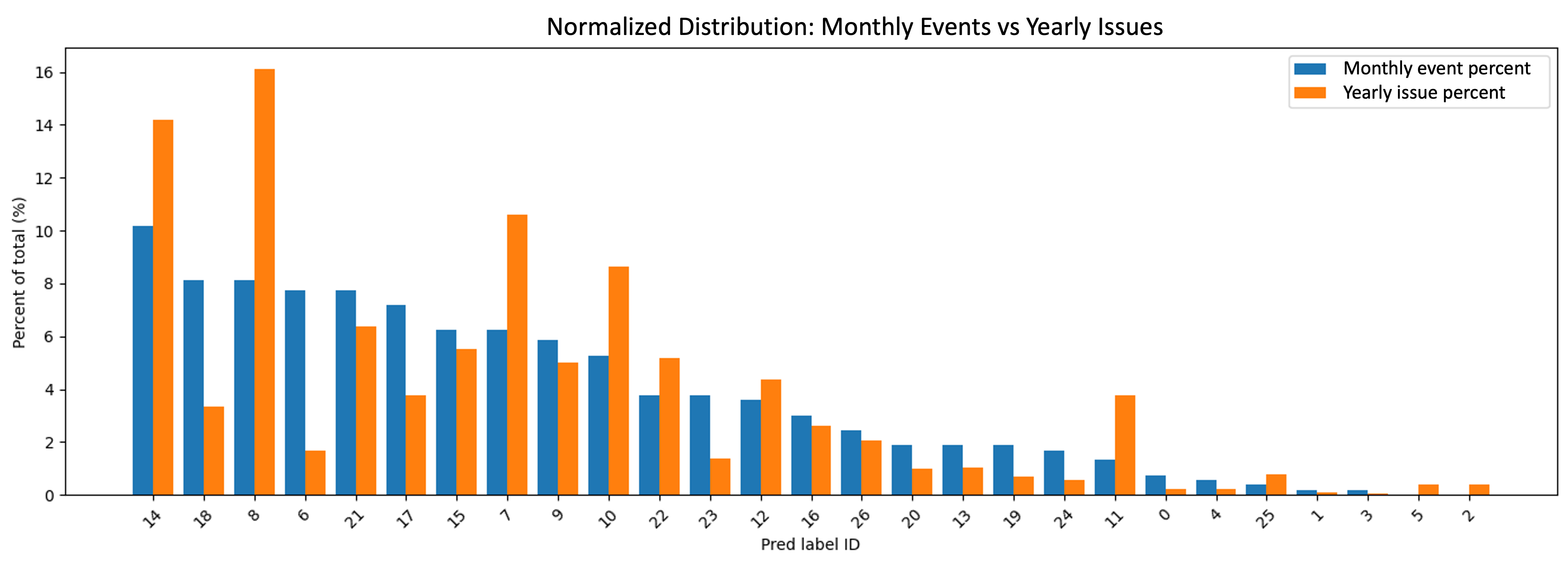}
  \caption{
  Normalized cluster-level distributions of monthly events and
  independently collected yearly issues across the 27 latent clusters.
  }
  \label{fig:cluster_event_correlation}
\end{figure*}
Based on manual semantic inspection, the clusters can be organized into four broad groups. The first group (clusters 0--5) comprises ego-turning scenarios, including right turns at T-junctions, waiting for or following a U-turning vehicle, and left- or right-turn maneuvers at different speed levels. The second group (clusters 6--15) captures close-range and high-speed interactions, including crossing or wrong-way agents, congested cut-ins, cut-ins from different relative directions, high-speed merging, and agents moving away from the ego vehicle. The third group (clusters 16--22) consists of congestion, intersection, and response related scenarios, such as obstacle-induced deceleration, queue entry, intersection crossing, surrounding-agent overtaking, and ego responses to nearby disturbances. The fourth group (clusters 23--25) contains merging and start-up behaviors, including vehicles entering from the left or right adjacent lane and previously stationary agents beginning to move. The final cluster (cluster 26) is retained as a general category for heterogeneous or weakly characterized scenarios that do not exhibit a single dominant interaction pattern. Overall, the discovered clusters capture complementary variations in road topology, relative position, speed, interaction distance, and
temporal evolution, providing an interpretable semantic structure for
the learned latent space.

The monthly event distribution (Fig.~\ref{fig:cluster_event_correlation}) is also strongly correlated with an
independently collected yearly issue distribution, achieving a Pearson
correlation of $r=0.758$ ($p=4.57\times10^{-6}$) and a Spearman rank
correlation of $\rho=0.840$ ($p=4.11\times10^{-8}$). Unlike the monthly events, which are automatically detected using predefined metrics, the yearly issues are reported and reviewed by human operators. This agreement
provides external evidence that the discovered clusters capture
recurring and temporally consistent interaction patterns.

\begin{figure}[!h]
  \centering

  \begin{subfigure}[t]{0.48\linewidth}
    \centering
    \includegraphics[width=\linewidth]
    {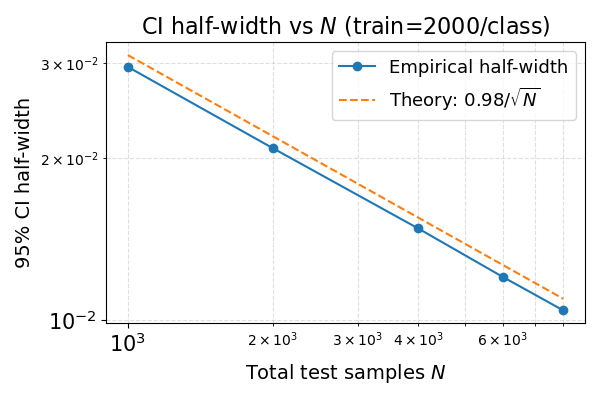}
    \caption{Cross-region: CI half-width versus test size.}
    \label{fig:c2st_sensitivity:a}
  \end{subfigure}
  \hfill
  \begin{subfigure}[t]{0.48\linewidth}
    \centering
    \includegraphics[width=\linewidth]
    {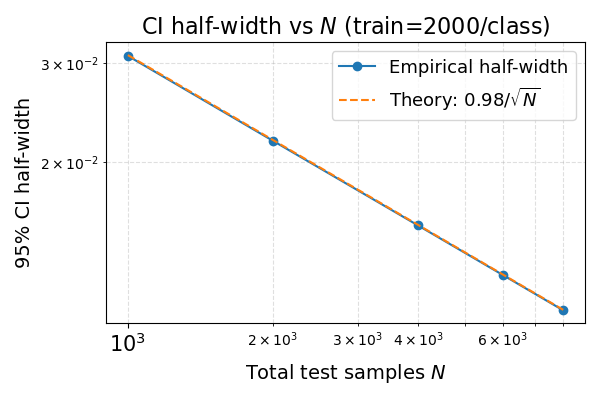}
    \caption{Cross-week: CI half-width versus test size.}
    \label{fig:c2st_sensitivity:b}
  \end{subfigure}

  \vspace{0.4em}

  \begin{subfigure}[t]{0.48\linewidth}
    \centering
    \includegraphics[width=\linewidth]
    {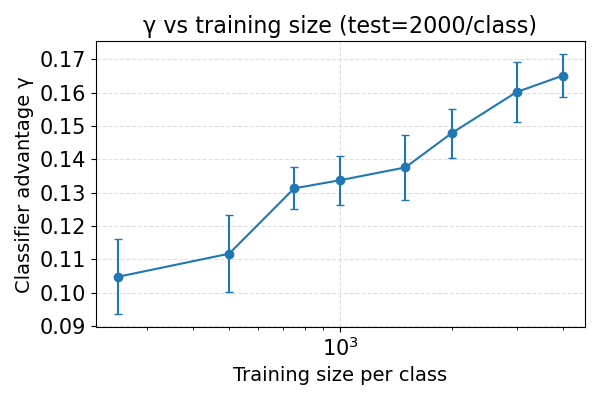}
    \caption{Cross-region: $\widehat{\gamma}$ versus training size.}
    \label{fig:c2st_sensitivity:c}
  \end{subfigure}
  \hfill
  \begin{subfigure}[t]{0.48\linewidth}
    \centering
    \includegraphics[width=\linewidth]
    {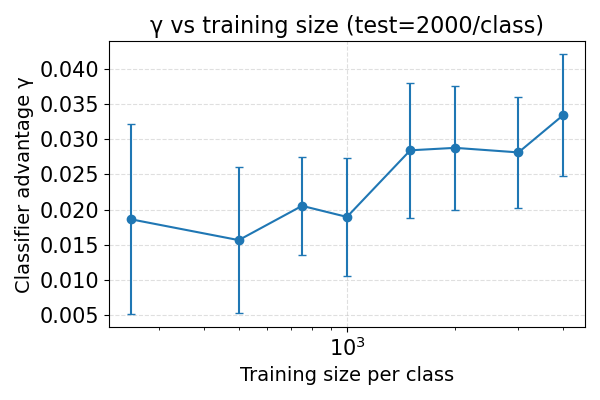}
    \caption{Cross-week: $\widehat{\gamma}$ versus training size.}
    \label{fig:c2st_sensitivity:d}
  \end{subfigure}

  \caption{
  Sensitivity of C2ST to the held-out test-set size and
  classifier-training size. The upper panels report empirical
  confidence-interval half-widths, while the lower panels report the
  mean classifier advantage with error bars across repeated runs.
  }
  \label{fig:c2st_sensitivity}
\end{figure}

\subsubsection{C2ST Implementation and Uncertainty}
\label{appendix:classifier_details}

The global C2ST classifier is a three-layer multilayer perceptron with
hidden dimensions 128 and 64, ReLU activations, and a scalar logit
output. It is trained using binary cross-entropy loss. The architecture
and optimization procedure are held fixed unless explicitly varied in
the robustness experiments.

The estimated classifier advantage is affected by both finite test-set
uncertainty and classifier-training variability. As shown in the upper
panels of Fig.~\ref{fig:c2st_sensitivity}, the empirical 95\%
confidence-interval half-width decreases at the expected
$\mathcal O(N_{\mathrm{te}}^{-1/2})$ rate for both cross-region and
cross-week comparisons.

To examine sensitivity to the classifier-training size, we vary the
number of training samples per class while fixing a balanced held-out
test set containing 2,000 samples per class, so that
$N_{\mathrm{te}}=4,000$. At this test-set size, the worst-case standard
deviation of the empirical accuracy is
\[
\sqrt{\frac{1}{4N_{\mathrm{te}}}}
\approx 0.0079,
\]
corresponding to a 95\% normal-approximation half-width of approximately
$0.0155$. The lower panels of Fig.~\ref{fig:c2st_sensitivity} therefore
reflect both classifier-training variability and the finite test-set
uncertainty quantified above.

\begin{figure*}[t]
    \centering
    \includegraphics[width=\textwidth]
    {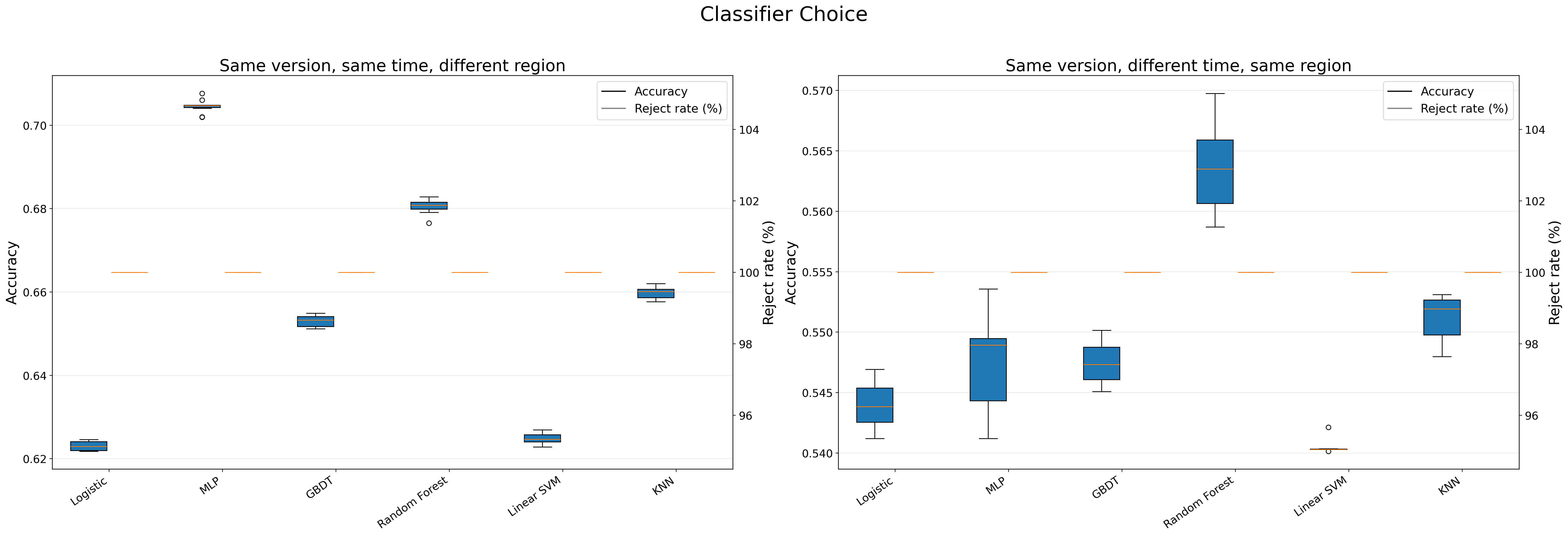}
    \caption{
    Sensitivity of global C2ST to the classifier family. Cross-region
    and cross-week comparisons are evaluated using logistic regression,
    MLP, GBDT, random forest, linear SVM, and kNN classifiers. The
    boxplots summarize variability across repeated runs.
    }
    \label{fig:robust_classifier_choice}
\end{figure*}

\begin{figure*}[!h]
    \centering
    \includegraphics[width=\textwidth]
    {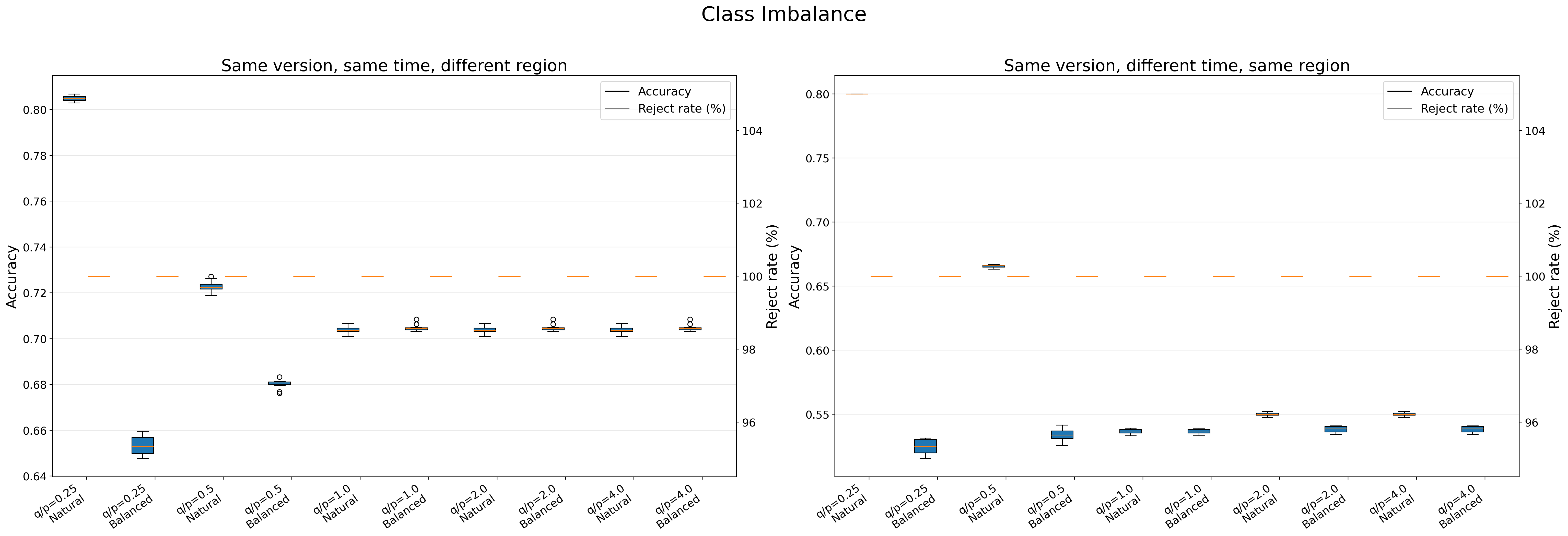}
    \caption{
    Sensitivity of global C2ST to the source--target class ratio in the
    classifier-training set. Natural sampling and balanced subsampling
    are compared, while accuracy is evaluated on the same balanced
    held-out test set.
    }
    \label{fig:robust_class_imbalance}
\end{figure*}

\begin{figure*}[!h]
    \centering
    \includegraphics[width=\textwidth]
    {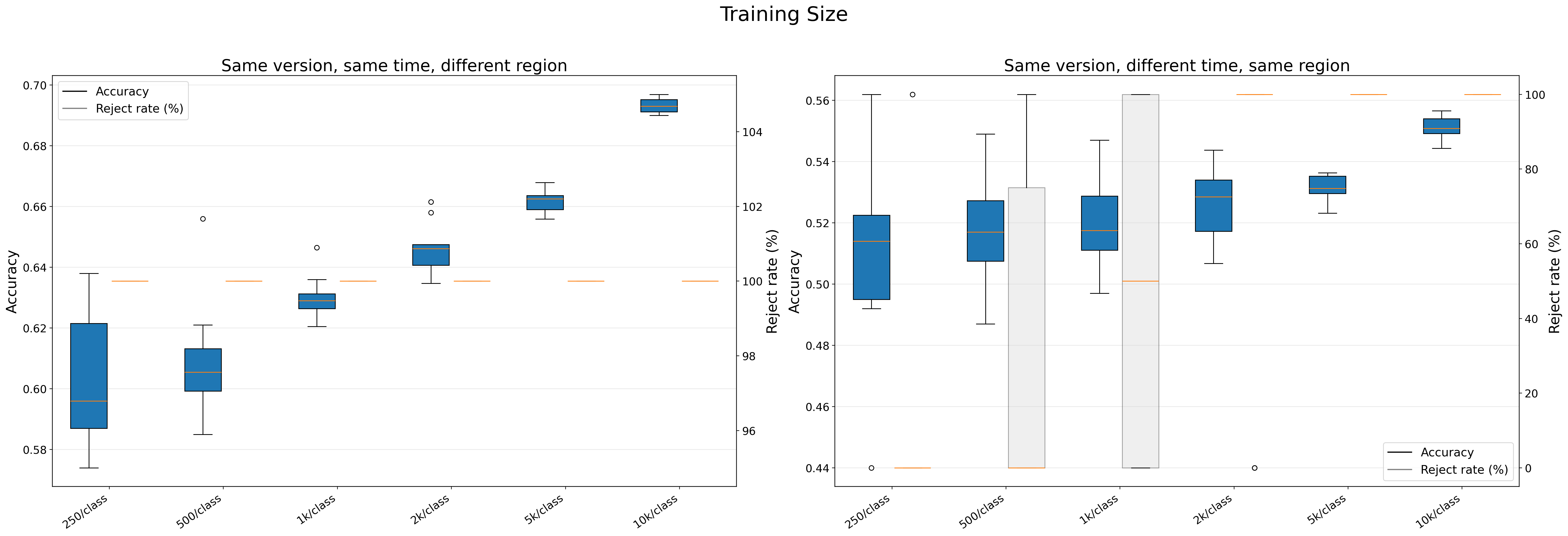}
    \caption{
    Sensitivity of global C2ST to the number of classifier-training
    samples per class. The stronger cross-region shift remains
    comparatively stable, whereas the weaker cross-week shift is more
    sensitive to limited training data.
    }
    \label{fig:robust_training_size}
\end{figure*}

\begin{figure*}[!h]
    \centering
    \includegraphics[width=\textwidth]
    {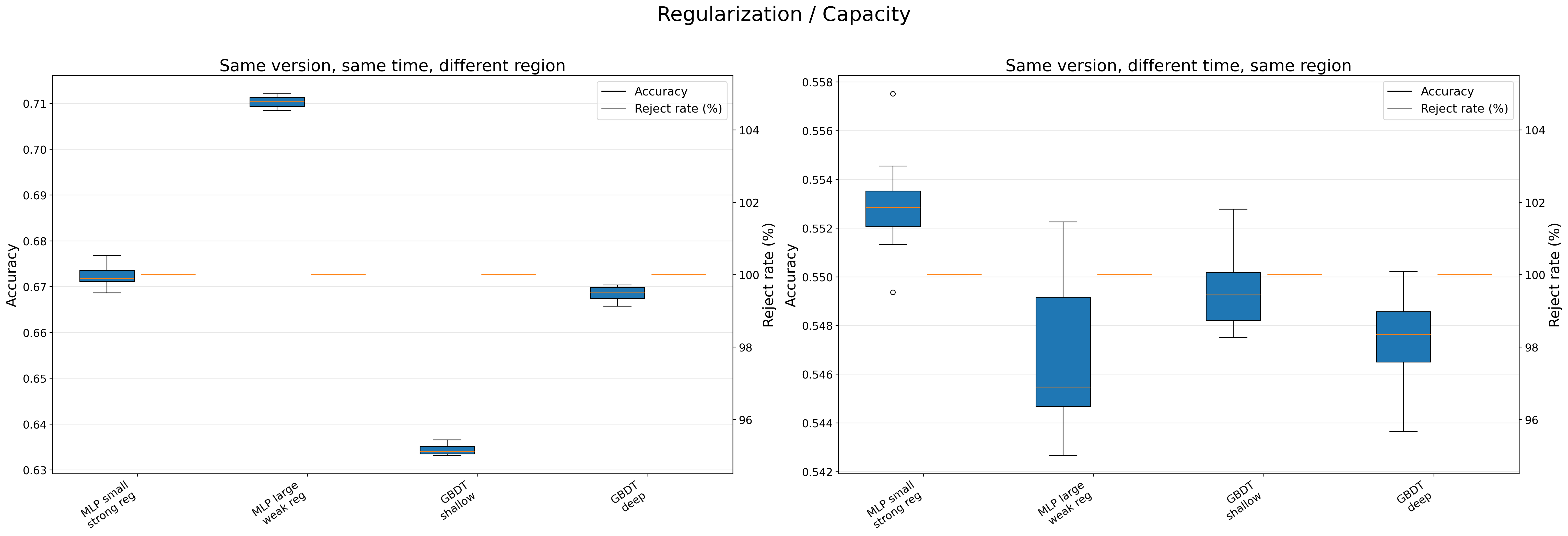}
    \caption{
    Sensitivity of global C2ST to classifier capacity and
    regularization. The evaluated configurations include small and
    large MLPs, together with shallow and deep GBDT classifiers.
    }
    \label{fig:robust_regularization_capacity}
\end{figure*}

\paragraph{Robustness results.}
We evaluate the robustness of global C2ST with respect to classifier
family, source--target class balance, classifier-training size, and
classifier capacity and regularization.
Table~\ref{tab:robustness_summary} summarizes the results across all
tested configurations, while
Figs.~\ref{fig:robust_classifier_choice}--\ref{fig:robust_regularization_capacity}
provide detailed results for the individual sensitivity factors.

\begin{table}[!h]
\centering
\caption{
Robustness of global C2ST under cross-region and cross-week
comparisons. $\#$ Configs denotes the number of experimental
configurations evaluated for each sensitivity factor. ACC is reported
as the mean $\pm$ standard deviation across the corresponding
configurations and random seeds.
}
\label{tab:robustness_summary}
\small
\setlength{\tabcolsep}{4pt}
\begin{tabular}{llcc}
\toprule
Regime
& Sensitivity factor
& $\#$ Configs
& ACC \\
\midrule

\multirow{4}{*}{Cross-region}
& Classifier family
& 6
& $0.658 \pm 0.029$ \\

& Training class ratio
& 10
& $0.709 \pm 0.037$ \\

& Training-set size
& 6
& $0.641 \pm 0.033$ \\

& Classifier capacity/reg.
& 4
& $0.671 \pm 0.027$ \\

\midrule

\multirow{4}{*}{Cross-week}
& Classifier family
& 6
& $0.549 \pm 0.008$ \\

& Training class ratio
& 10
& $0.577 \pm 0.084$ \\

& Training-set size
& 6
& $0.526 \pm 0.021$ \\

& Classifier capacity/reg.
& 4
& $0.549 \pm 0.003$ \\

\bottomrule
\end{tabular}
\end{table}

As summarized in Table~\ref{tab:robustness_summary}, the qualitative
ordering of the two comparison regimes remains stable across all tested
sensitivity factors: cross-region comparisons consistently yield higher
classification accuracy than cross-week comparisons. This observation
supports the main-text conclusion that geographic variation induces a
stronger distributional shift than temporal variation within the same
region.

Figure~\ref{fig:robust_classifier_choice} shows that the numerical C2ST
accuracy varies across classifier families, reflecting differences in
the portions of the latent discrepancy captured by their decision
functions. Nevertheless, all classifier families preserve the same
qualitative ordering between the cross-region and cross-week regimes.

Training-class imbalance introduces greater variability, particularly
for the weaker cross-week comparison, as shown in
Fig.~\ref{fig:robust_class_imbalance}. However, when performance is
evaluated on a balanced held-out test set, the cross-region discrepancy
remains consistently stronger. Among the evaluated factors,
classifier-training size has the most pronounced effect
(Fig.~\ref{fig:robust_training_size}). Limited training data reduce the
estimated accuracy substantially for the cross-week comparison, whose
signal lies closer to the detection threshold, whereas the stronger
cross-region shift remains comparatively stable. Finally,
Fig.~\ref{fig:robust_regularization_capacity} shows that classifier
capacity and regularization affect the magnitude of the estimated
discrepancy but do not alter the qualitative ordering between the two
regimes.
\begin{table*}[!t]
\centering
\caption{
Full-scene distribution comparison and metric-calibration results under
a fixed AV system version. Day~1 is treated as the source distribution,
while Days~2 and~3 serve as the target distributions. ACC and
$2\widehat{\gamma}=2(\mathrm{ACC}-0.5)$ quantify the global
classifier-detectable discrepancy. Relative error is computed between
the LIS-calibrated estimate and the corresponding target reference
value. }
\label{tab:full_scene_calibration}
\small
\setlength{\tabcolsep}{5pt}
\begin{tabular}{lcccccc}
\toprule
Comparison
& ACC
& $2\widehat{\gamma}$
& Naive estimate
& Target reference
& LIS estimate
& Relative error (\%) \\
\midrule
Day~1 $\rightarrow$ Day~2
& $0.536$
& $0.072$
& $0.011001$
& $0.011485$
& $0.011412$
& $0.63$ \\

Day~1 $\rightarrow$ Day~3
& $0.542$
& $0.084$
& $0.011001$
& $0.011764$
& $0.011650$
& $0.97$ \\
\bottomrule
\end{tabular}
\end{table*}
\subsubsection{Localization of Real-World Distribution Shifts}
\label{app:real_localization}
Figure~\ref{fig:global_real_comparison} provides a qualitative
visualization of the two real-world comparison settings analyzed in the
main text. The cross-region distributions exhibit more pronounced
changes in latent-space occupancy, whereas the cross-week distributions
retain substantially greater overlap. This visual pattern is consistent
with both the main-text C2ST results and the robustness analysis in
Table~\ref{tab:robustness_summary}. All statistical tests and local
difference analyses are performed in the original latent space; the
two-dimensional projections are used only for qualitative
interpretation.
\begin{figure}[t]
    \centering
    \includegraphics[width=0.99\linewidth]
    {figures/same_version_diff_region_three_panels_paper.pdf}

    \vspace{0.3em}

    \includegraphics[width=0.99\linewidth]
    {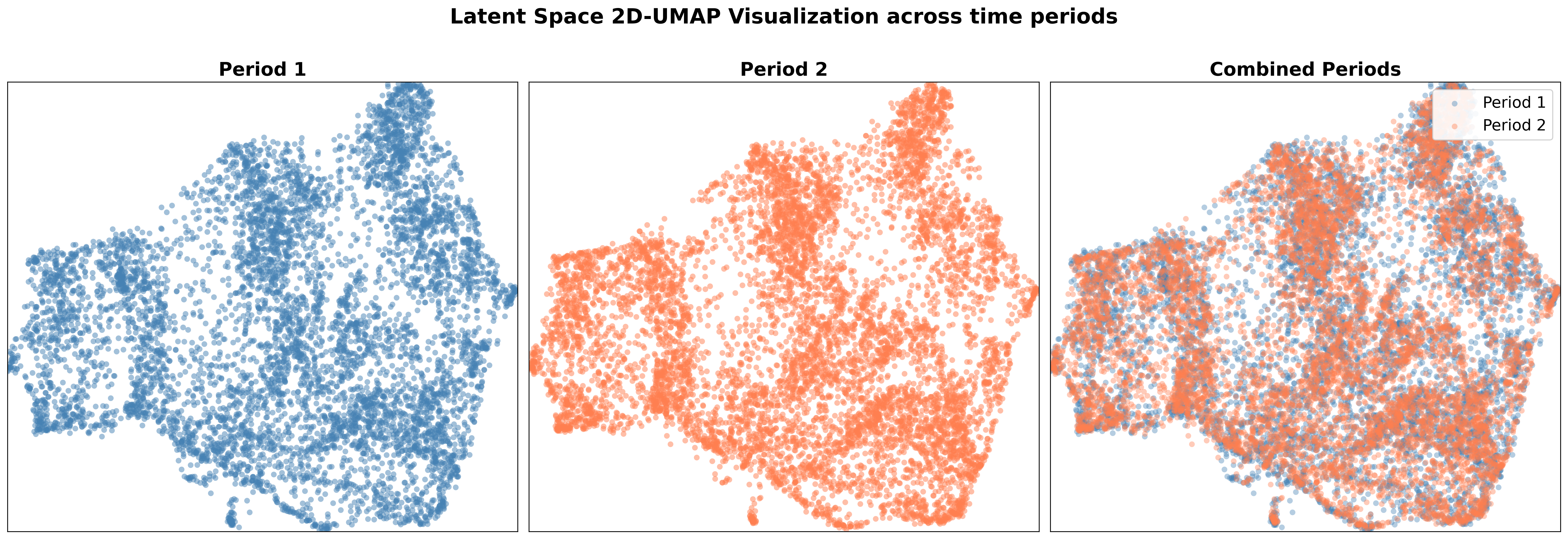}

    \caption{
    Qualitative visualization of the real-world latent distributions.
    Top: cross-region comparison under the same AV system version and
    collection period. Bottom: cross-week comparison within the same
    region and under the same AV system version. The cross-region
    distributions exhibit more pronounced changes in latent-space
    occupancy, whereas the cross-week distributions show greater
    overlap. The projections are used only for visualization; all
    statistical analyses are conducted in the original latent space.
    }
    \label{fig:global_real_comparison}
\end{figure}
For the moderate cross-week shift, the second stage of the proposed
framework identifies the latent neighborhoods that contribute most
strongly to the global discrepancy. Representative scenarios sampled
from the detected mismatch regions are shown in
Fig.~\ref{fig:local_region_examples}.
\begin{figure}[!h]
\centering
\includegraphics[width=\linewidth]{figures/cluster1.pdf}

\vspace{0.4em}
\includegraphics[width=\linewidth]{figures/cluster2.pdf}

\vspace{0.4em}
\includegraphics[width=\linewidth]{figures/cluster3.pdf}

\caption{
Representative road-test scenarios sampled from the detected mismatch
regions. From top to bottom, the examples illustrate changes involving
large-vehicle cut-ins, temporary construction-related road conditions,
and U-turn interactions.
}
\label{fig:local_region_examples}
\end{figure}
The localized regions correspond to coherent and interpretable changes
in scenario composition. The first row of
Fig.~\ref{fig:local_region_examples} shows that cut-ins involving large
vehicles occur more frequently in Period~1. The second row captures
temporary construction-related changes in the testing route, which
induce consistent ego-vehicle deceleration followed by cut-ins from
surrounding agents. The third row contains U-turn interactions that
occur more frequently in Period~2. These examples support the
main-text observation that the proposed localization procedure captures
meaningful changes in road context and dynamic interaction patterns,
rather than isolated or semantically unrelated samples.

\subsubsection{Distributional Shift versus Metric Shift}
\label{app:distribution_metric_shift}
For the cross-week comparison, C2ST achieves an average classification
accuracy of $53.4\%$, corresponding to a classifier-detectable
discrepancy of
\[
2\widehat{\gamma}
=
2(0.534-0.5)
\approx 0.068.
\]
Over the same periods, the observed difference in the event metric is
\[
\left|
\widehat{\mu}_P-\widehat{\mu}_Q
\right|
=
|0.0624-0.0608|
\approx 0.0016.
\]

These quantities characterize different aspects of the distributional
change. The statistic $2\widehat{\gamma}$ measures the discrepancy
detectable from the complete latent scenario representation, whereas
the metric difference measures the effect of that change on one
specific downstream outcome. Thus, a detectable shift in the scenario
distribution need not induce an equally large change in every
performance metric.

\subsubsection{Full-scenes Experiments}

We further extend the proposed framework from cut-in scenarios to
full-scene road-testing data covering 13 representative interactive scenario categories. These include six ego-vehicle behaviors—lane changing, lane keeping, yielding, pulling out, pulling over, and nudging—and six traffic-participant behaviors—lane changing, parallel driving, cutting in, wrong-way driving, jaywalking, and backing up—together with a general category for scenarios that do not fall into any of the predefined classes. These categories are
automatically detected and annotated by the production data pipeline of
an industrial Level-4 AV system. To learn a unified full-scene latent
representation, we use complete road-testing logs collected over four
consecutive days. The training set is downsampled to balance the 13 scenario categories while preserving diverse road structures, traffic conditions, and multi-agent interaction patterns

For evaluation, we use full-day road-testing logs collected over three
additional consecutive days under a fixed AV system version, comprising approximately 90,000 scenarios per day. Fixing the
AV version reduces policy-related variation, allowing the comparison to
focus primarily on changes in scenario exposure and operational
conditions. We treat the first evaluation day as the source distribution and calibrate its metric estimate toward the distributions observed on the second and third days.  Table~\ref{tab:full_scene_calibration}
summarizes the global C2ST and metric-calibration results.

As shown in Table~\ref{tab:full_scene_calibration}, the event rate
increases from $0.011001$ on Day~1 to $0.011485$ and $0.011764$ on
Days~2 and~3, respectively. Directly transferring the Day~1 estimate
to the two target distributions therefore yields relative errors of
$4.21\%$ and $6.49\%$. The global C2ST results indicate moderate but
detectable cross-day distributional shifts, with a larger discrepancy
between Days~1 and~3. After localized importance calibration, the
estimated metrics move substantially closer to the corresponding target
reference values, reducing the remaining relative errors to
approximately $0.63\%$ and $0.97\%$, respectively. These preliminary
results demonstrate the applicability of the framework beyond
cut-in-only evaluation and can reduce distribution-induced metric bias in heterogeneous full-scene road-testing data.

\end{document}